\newcommand{\name}{Wally}
\newcommand{\swifthe}{Swift Homomorphic Encryption}
\newcommand{\swifthelink}{https://github.com/apple/swift-homomorphic-encryption}
\newcommand{\swifthecitation}{Apple_Inc_and_Swift_Homomorphic_Encryption_project_authors_Swift_Homomorphic_Encryption}
\newcommand{\Zset}{\mathbb{Z}}
\newcommand{\kg}{\mathsf{KeyGen}}
\newcommand{\enc}{\mathsf{Enc}}
\newcommand{\sk}{\mathsf{sk}}
\newcommand{\evk}{\mathsf{evk}}
\newcommand{\ct}{\mathsf{ct}}
\newcommand{\id}{\mathsf{id}}
\renewcommand{\vec}[1]{\mathbf{#1}}
\newcommand{\mat}[1]{\mathbf{#1}}
\begin{document}

\title{Wally: Batched Private Nearest Neighbor Search at Scale}

\author{Hilal Asi}
\affiliation{\institution{Apple Inc.}\country{USA}}
\email{hasi@apple.com}

\author{Fabian Boemer}
\affiliation{\institution{Apple Inc.}\country{USA}}
\email{fboemer@apple.com}

\author{Nicholas Genise}
\affiliation{\institution{Apple Inc.}\country{USA}}
\email{ngenise@apple.com}

\author{Muhammad Haris Mughees}
\authornote{Corresponding author.}
\affiliation{\institution{Apple Inc.}\country{USA}}
\email{mughees@apple.com}

\author{Tabitha Ogilvie}
\authornote{This work was done while interning at Apple.}
\affiliation{\institution{Royal Holloway, University of London}\country{United Kingdom}}
\email{tabitha.l.ogilvie@gmail.com}

\author{Rehan Rishi}
\affiliation{\institution{Apple Inc.}\country{USA}}
\email{rrishi@apple.com}

\author{Guy N. Rothblum}
\affiliation{\institution{Apple Inc.}\country{USA}}
\email{gn_rothblum@apple.com}

\author{Kunal Talwar}
\affiliation{\institution{Apple Inc.}\country{USA}}
\email{ktalwar@apple.com}

\author{Karl Tarbe}
\affiliation{\institution{Apple Inc.}\country{USA}}
\email{ktarbe@apple.com}

\author{Akshay Wadia}
\affiliation{\institution{Apple Inc.}\country{USA}}
\email{awadia@apple.com}

\author{Ruiyu Zhu}
\affiliation{\institution{Apple Inc.}\country{USA}}
\email{rzhu@apple.com}

\author{Marco Zuliani}
\affiliation{\institution{Apple Inc.}\country{USA}}
\email{mzuliani@apple.com}

\renewcommand{\shortauthors}{Asi et al.}

\begin{CCSXML}
<ccs2012>
   <concept>
       <concept_id>10002978.10003018.10003020</concept_id>
       <concept_desc>Security and privacy~Management and querying of encrypted data</concept_desc>
       <concept_significance>500</concept_significance>
       </concept>
   <concept>
       <concept_id>10002978.10002991.10002995</concept_id>
       <concept_desc>Security and privacy~Privacy-preserving protocols</concept_desc>
       <concept_significance>500</concept_significance>
       </concept>
   <concept>
       <concept_id>10002978.10003029.10011150</concept_id>
       <concept_desc>Security and privacy~Privacy protections</concept_desc>
       <concept_significance>500</concept_significance>
       </concept>
 </ccs2012>
\end{CCSXML}

\ccsdesc[500]{Security and privacy~Management and querying of encrypted data}
\ccsdesc[500]{Security and privacy~Privacy-preserving protocols}
\ccsdesc[500]{Security and privacy~Privacy protections}

\keywords{Private information retrieval, homomorphic encryption, differential privacy, nearest neighbor search}

\begin{abstract}
We present Wally, a batched private nearest-neighbor search protocol that uses differential privacy to break the linear computation barrier of fully-oblivious schemes. In prior systems like Tiptoe, the server must process the entire database per query to hide the access pattern, resulting in low throughput (909 QPS) and high communication (17.4~MB) on a 3.2-million-entry database. Sublinear alternatives like Pacmann avoid full scans but require 614~MB of client storage and an offline phase where clients stream the entire database.

Wally's key insight is that fully-oblivious schemes are prohibitively expensive at scale, but the same scale that makes them expensive also provides an opportunity. Large-scale systems naturally have many concurrent clients. Wally batches queries from these non-coordinating clients and has each client independently add a few fake queries to hide which clusters it accesses. The fake query counts follow a negative binomial distribution, which is both non-negative (valid as a query count) and infinitely divisible (allowing each client to sample independently without coordination). Each client sends its queries at random times through an existing anonymization service, avoiding the need for a centralized shuffler and eliminating any single point of failure. The server sees only an anonymized, noisy stream of cluster accesses that is provably $(\epsilon, \delta)$-differentially private, enabling it to compute over only the relevant clusters rather than the entire database. To further reduce communication, the client encrypts its query embedding under somewhat homomorphic encryption (SHE) so that the server returns only encrypted similarity scores rather than full cluster contents. On a 3.2-million-entry database with batches of 500,000 queries, Wally achieves $7\text{-}29\times$ higher throughput and $6.7\text{-}31\times$ lower communication than Tiptoe, and $15{,}000\times$ lower client storage than Pacmann, while providing strong $(\epsilon{=}0.1, \delta{=}2^{-26})$-differential privacy and comparable or better accuracy.

For metadata retrieval, the client uses keyword private information retrieval (PIR) to privately fetch data associated with the nearest result. We propose optimizations to both SHE and keyword PIR that yield $2\text{-}3\times$ improvements in PIR and $20\text{-}25$\% in BFV operations, and release a production-quality open-source BFV homomorphic encryption library in Swift.
\end{abstract}

\maketitle

\section{Introduction}\label{sec:intro}
Nearest neighbor search is a key building block in many applications. Search engines use it to find relevant documents~\cite{carrara2022approximate,zhang2019grip}. Recommendation systems use it to suggest products~\cite{ahuja2019movie}. Retrieval-Augmented Generation (RAG) systems use it to retrieve relevant context that augments language models' responses to user prompts~\cite{sawarkar2024blended, csakar2025maximizing}. In this scenario, a server maintains a database of semantically significant vectors, while a client sends a query vector. The server then identifies the vectors in its database that are most similar to the client's query. However, this setup raises privacy concerns, as the server learns the client's query, potentially exposing sensitive personal information. For example, if the client query is an embedding extracted from a personal photo, the server can potentially invert this embedding and infer sensitive information about the client~\cite{DBLP:conf/emnlp/MorrisKSR23}.

The potential privacy issues surrounding nearest-neighbor search have motivated numerous private nearest-neighbor (PNN) protocols. Legacy PNN protocols provide privacy by periodically injecting fake queries that resemble real queries~\cite{xu2021utilitarian,petit2015peas, nissenbaum2009trackmenot, murugesan2009providing, yang2017anserini} or by obfuscating clients' original queries using fake terms~\cite{DBLP:journals/pvldb/PangDX10, ye2009noise, rebollo2010optimized, arampatzis2015versatile}. Although these approaches are efficient, they lack standard privacy guarantees and often rely on assumptions about limited adversarial capabilities. For example, many of these works assume that adversaries have limited knowledge, such as being unaware of the obfuscation algorithm, lacking background data, or employing fixed inference strategies. Consequently, when removing these assumptions or considering a more capable adversary, their privacy guarantee breaks~\cite{al2012trackmenot,balsa2012ob, gervais2014quantifying, peddinti2010privacy}.

Although recent PNN protocols offer strong privacy guarantees, they rely on restrictive assumptions or incur prohibitive costs. For example, PRECO~\cite{DBLP:conf/sp/Servan-Schreiber22} and SANNS~\cite{DBLP:conf/uss/0030CDPRR20} achieve efficiency by assuming the existence of two non-colluding servers. This requires two independent organizations to jointly run expensive computation, and if the servers collude, all privacy guarantees are lost with no graceful degradation. In practice, finding and maintaining two independent, trusted, and computationally capable parties is a significant deployment barrier.

In contrast, Tiptoe~\cite{DBLP:conf/sosp/HenzingerDCZ23}, a state-of-the-art single-server protocol which uses somewhat homomorphic encryption (SHE), requires approximately 17 MB of communication per query and achieves only 909 QPS on 10,000 cores for a database with 3.2 million entries. This overhead arises because the server needs to scan the entire dataset, and the client must send cryptographic material for each entry. More recent sublinear computation approaches, such as Pacmann~\cite{DBLP:journals/iacr/ZhouSF24}, eliminate full database scans at query time but require clients to stream the entire database as a pre-computation, resulting in 61 MB of amortized communication per query for the same database. Such overheads are prohibitive in production deployments, which are critical for recent AI applications where systems must serve thousands of concurrent clients over a vector index spanning millions of vectors.

As a result, current PNN protocols face an undesirable trade-off between privacy and performance. On one hand, efficient approaches lack a standard privacy guarantee, rendering them vulnerable. On the other hand, schemes that do offer a standard privacy guarantee impose extremely high computational or communication requirements, making them impractical for systems that have high query volume.

\textbf{Our System.}
To bridge this gap, we present \name, a batched private nearest-neighbor protocol that turns high query volume into a performance advantage. The key insight is that differential privacy enables the server to compute over only the queried clusters rather than the entire database, breaking the linear computation barrier of fully-oblivious schemes like Tiptoe. Unlike prior sublinear schemes such as Pacmann, which require clients to stream the entire database offline (614~MB of client storage) and perform many sequential round trips, \name\ requires no offline phase, no large client-side storage, and no interactive rounds.

\name\ batches queries from non-coordinating clients and processes them in fixed-duration epochs. Each client adds fake queries drawn from a negative binomial distribution. This distribution is a natural fit because fake query counts must be non-negative integers, ruling out continuous distributions like the Gaussian. Moreover, the negative binomial is infinitely divisible, which allows non-coordinating clients to independently generate noise whose aggregate provably provides $(\epsilon, \delta)$-DP, without any client-to-client communication. All queries (real and fake) are routed through an existing anonymization service that strips identifying information. Unlike traditional shuffle-DP protocols that require a dedicated shuffler to collect and permute all messages before forwarding, each client in \name\ independently sends queries at uniformly random time slots within the epoch. The anonymization network simply forwards queries as they arrive, and the shuffling emerges implicitly from the random timing and anonymization. This is significantly more practical than a centralized shuffler, which introduces a single point of trust and failure. If the shuffler goes down, the entire service halts. In \name, individual clients can join or leave without affecting other participants, and the protocol leverages widely deployed anonymization services rather than requiring purpose-built infrastructure. As a result, the server observes only a noisy histogram of cluster accesses that is statistically independent of any single client's queries. \name\ further utilizes SHE to compute similarity scores between encrypted query embeddings and cluster embeddings, reducing the response to only the scores rather than the full cluster contents. Once a client identifies its nearest neighbor, it privately fetches the associated metadata using keyword private information retrieval (keyword PIR). We propose novel optimizations to both SHE and keyword PIR that yield $2\text{-}3\times$ improvements.

\name\ achieves $7\text{-}29\times$ higher QPS and $6.7\text{-}31\times$ lower communication than Tiptoe, and $15{,}000\times$ lower client storage than Pacmann. The protocol provides a tunable accuracy-performance tradeoff at fixed differential privacy, allowing system designers to improve accuracy with graceful performance degradation.

\name\ trades full obliviousness for $(\epsilon,\delta)$-differential privacy, a weaker but widely accepted privacy standard. While fully-oblivious schemes like Tiptoe and Pacmann reveal nothing about the queried cluster, this guarantee comes at the cost of linear computation or massive client storage. In contrast, \name's DP guarantee bounds the server's ability to infer any single client's queries to a multiplicative factor of $e^{\epsilon}$, except with probability $\delta$. We set $\epsilon = 0.1$ and $\delta = 2^{-26}$, which represent a strong privacy guarantee. This relaxation is precisely what enables sublinear server cost, and we argue it is a favorable tradeoff for high-throughput applications where the performance cost of full obliviousness is prohibitive.

\textbf{Motivational example.} Our PNN is suitable for a wide range of applications. Here, we provide a concrete example from AI-based photo systems. Clients want to privately annotate their photos with relevant context retrieved from a knowledge database stored on a server. By utilizing the enhanced photo context, on-device models can generate better search results and recommendations. At scale, these systems have millions of users.
These retrievals can occur in the background since the enhanced search experience doesn't need to be immediately available.
In other words, high throughput is a must, while low latency can be relaxed.
Therefore, batch processing is natural.
Clients submit their photo embeddings to a batch, which is processed during scheduled epochs. Once the relevant context is retrieved and integrated, the enhanced systems provide more accurate photo search and personalized recommendations.

The protocol's privacy guarantee necessitates that the anonymization service and the server remain independent. We contend that our approach is more practical than a generic 2PC-based solution~\cite{DBLP:conf/sp/Servan-Schreiber22}. This is because numerous independent applications rely on anonymization services, and their security models depend on the trustworthiness of these services. Consequently, the anonymization network has a significant economic incentive to avoid colluding with any single partner, including our server.

\textbf{Our contributions.} We make the following contributions:
\begin{enumerate}
    \item We propose \name, a batched private nearest-neighbor protocol with sublinear server computation, no offline phase, no large client-side storage, and no interactive rounds. Prior sublinear schemes such as Pacmann require clients to stream the entire database offline and perform many sequential round trips. \name\ instead uses differential privacy to hide which clusters were accessed, enabling the server to compute over only the queried clusters. To our knowledge, this is the first system where differential privacy directly enables efficient private search.

    \item We introduce a distributed noise mechanism based on infinitely divisible distributions (negative binomial) that allows non-coordinating clients to independently generate fake queries whose aggregate provably provides $(\epsilon, \delta)$-DP. This requires no client-to-client communication or synchronization beyond a shared epoch structure. We provide a formal security proof demonstrating the privacy of the protocol against semi-honest adversaries.

    \item We propose several optimizations to keyword PIR and SHE computation that are of independent interest. For keyword PIR, we introduce two-table cuckoo hashing, optimized database layouts using uneven dimensions, and evaluation key reduction, yielding a $2\text{-}3\times$ improvement in server computation and communication. For SHE, we develop lazy modular reduction in the NTT and ciphertext operations, improving BFV runtime by $20\text{-}25$\%, and a plaintext RNS technique that enables high-precision similarity computation without increasing request size.

    \item We implemented the BFV SHE scheme in Swift and released it as a standalone open-source library. This is, to our knowledge, the first production-quality homomorphic encryption library in Swift, and it incorporates all of the above optimizations. The library is highly versatile and can be integrated into other privacy-preserving applications.

    \item We evaluate \name\ against state-of-the-art private search systems and demonstrate that \name\ achieves $7\text{-}29\times$ higher QPS and $6.7\text{-}31\times$ lower communication than Tiptoe, and $15{,}000\times$ lower client storage than Pacmann, while maintaining strong $(\epsilon{=}0.1, \delta{=}2^{-26})$-DP and comparable or better accuracy.
\end{enumerate}

\subsection{Overview}

At a high level, \name\ uses a clustering-based approximate nearest-neighbor scheme. This is the same search procedure implemented in modern vector search libraries and RAG retrieval stacks (e.g., FAISS IVF-Flat/IVF-PQ, Milvus IVF, Google ScaNN).

\textbf{Index construction (server side).}
The server hosts \(N\) assets and encodes each asset \(x\) into an embedding \(f(x)\in\mathbb{R}^d\) using an embedding model \(f\). The embedding function is semantic: the inner product between vectors of similar assets is large (equivalently, these vectors have high cosine similarity after normalization). The server then partitions the \(N\) embeddings into \(K\) clusters (e.g., via \(k\)-means clustering) and shares the \(K\) centroids \(\{c_1,\ldots,c_K\}\) of these clusters with the participating clients.

\textbf{Nearest-neighbor querying.}
A non-private search operates as follows.
Given a query \(q\),
\begin{enumerate}
  \item The client computes its embedding \(f(q)\) and selects the \(\Delta\) nearest centroids to \(f(q)\).
  \item The client sends those \(\Delta\) cluster identifiers to the server.
  \item The server returns the candidate set consisting of the embeddings (and their IDs) in the selected clusters.
  \item The client re-ranks the candidates locally using exact similarity to obtain the true nearest neighbor.
\end{enumerate}

This protocol lacks privacy; by monitoring the cluster IDs of clients' query, the server gains insight into their inquiries. Since single-query PNN systems are inefficient, we consider PNN at the batch level.

\textbf{Wally: batched private nearest neighbor search.}
Let \(U\) represent the number of clients participating in a batch, each contributing a maximum of \(\Delta\) cluster queries. Our protocol ensures the server's batch-level observation is a noisy histogram of cluster counts with per-client sensitivity of \(\Delta\), yielding \((\varepsilon, \delta)\)-DP. The next paragraphs detail the construction.

In the first step of the construction, all queries are routed through an anonymization service. This strips queries of all identifiable information, such as IP addresses, before forwarding them to the \name\ server. We assume that the anonymity network combines queries that arrive within one-second slots and forwards them together.

Although the queries are anonymized, this does not guarantee privacy. The server can infer information about the client's query by analyzing the access pattern (e.g., whether a particular cluster is accessed). Therefore, in addition, each participating client also issues fake queries to obscure access patterns. The number of fake queries per client is distributed according to a negative binomial distribution. The expected number of fake queries is very small.
To generate a single fake query, the client selects a uniform random cluster and queries it. Importantly, the client does not need to coordinate this query with the server or with other clients. As shown in \autoref{sec:performance}, the overhead of fake queries is modest (on average one to two fake queries per real query). We emphasize that fake queries do not affect the accuracy or correctness seen by any client, as the client can discard these responses.

The protocol operates in fixed-duration \emph{epochs}, each of which defines a batch of queries from $U$ clients. Within an epoch, each client schedules queries at uniformly random time slots, preventing the server from linking queries to clients via timing. The per-client overhead of fake queries scales inversely with $U$, so high query volume directly reduces overhead.

We show that anonymizing queries, sampling fake queries from a negative binomial distribution, and randomizing the query schedule within an epoch are equivalent to sharing with the server a \emph{noised histogram} of queries from all clients. We then prove that this noisy histogram guarantees $(\epsilon, \delta)$-DP.

For fixed $U$ and $\Delta$, the expected number of fake queries increases linearly with the number of clusters $K$. Therefore, to reduce the overhead of fake queries, we set $K$ to a lower value. But doing so will increase cluster sizes, which in turn would lead to high response overhead, as the server returns embeddings for the entire cluster for each query. To reduce response overhead, we utilize somewhat homomorphic encryption (SHE). Specifically, for each real query, the client encrypts the query embedding using SHE and sends it, along with the corresponding cluster identifier, to the server. For a fake query, the client encrypts an arbitrary embedding and picks a cluster identifier uniformly at random.

The server computes the similarity by performing a dot product between the encrypted query embeddings and cluster embeddings under SHE. Conceptually, this requires multiplying an encrypted client vector with the server plaintext embedding matrix; therefore, we adopt the diagonal matrix encoding of~\cite{DBLP:conf/crypto/HaleviS18}.
For similarity computations, this encoding eliminates unnecessary (typically expensive) SHE operations, such as rotations and ciphertext-ciphertext multiplications. Furthermore, this requires only a single level of ciphertext-plaintext multiplication to compute similarity scores. We extend this encoding with efficient plaintext packing, enabling the packing of thousands of scores into a single SHE ciphertext.

\textbf{Private metadata fetch.}
In many applications, once the client has identified a nearest neighbor, it wants to fetch metadata for that asset. As mentioned, we rely on keyword private information retrieval (keyword PIR) for metadata fetch. At a high level, once all clients have identified the entries nearest to their embeddings, they participate in a batch keyword PIR protocol to retrieve metadata. Our batch keyword PIR follows the same blueprint as our batch PNN search protocol. The server divides the metadata into clusters. The client queries real and fake clusters at random instances. For each real query, the client generates a keyword PIR query for the identifier of the nearest neighbors as input. Similarly, for each fake query, the client can use any fake keyword to generate a keyword PIR query. The server then evaluates the keyword PIR queries against the desired clusters and returns encrypted responses containing the desired metadata to the client.

Our keyword PIR scheme uses cuckoo hashing to map keywords to buckets and then uses SHE-based index PIR scheme MulPIR to fetch particular buckets~\cite{DBLP:conf/uss/AliLP0SSY21}. We propose several novel optimizations that improve the computation and reduce the communication overhead by $2\text{-}3\times$. We believe that these optimizations are of independent interest.

\textbf{Implementation.} We implemented similarity computation and keyword PIR using the BFV SHE scheme in Swift. To optimize the implementation, we employed various techniques. For instance, in BFV SHE operations, integer modulus operations are bottlenecks. To mitigate this, we evaluate operations in large integer buffers and perform the modulus operation only when the buffer overflows. Additionally, when we evaluate the same operation multiple times in a sequence, we utilize hoisting (moving invariant computations outside loops) to avoid redundant steps. We have released all these optimizations along with our implementation of BFV as a standalone Swift library.

We discuss related works in \autoref{sec:relatedwork}.

\section{Background}\label{sec:background}

Let $[n] := \{1,\dots,n\}$.
Matrices are in bold capitals (e.g., $\mat M$) and vectors in bold lowercase (e.g., $\vec v$). All logarithms are natural unless stated. Sampling $a$ from a distribution $D$ is written $a \gets D$; for finite set $S$, $a \gets S$ denotes uniform sampling. We write $\Zset$ for integers and $\Zset_Q$ for integers modulo $Q$. For parameters $r,p$, $\mathcal{NB}(r,p)$ is the negative binomial distribution with pmf $\Pr[X=k]=\binom{k+r-1}{k}(1-p)^k p^r$, $k\ge0$, which is infinitely divisible (Def.~\ref{def:inf_div}).

\begin{definition}\label{def:inf_div}
    A distribution $\mathcal{D}$ is infinitely divisible if for all
    $n \in \mathbb{N}$, it can be expressed as the sum
    of $n$ i.i.d. variables. I.e., there exists a $\mathcal{D}'$ such
    that $\mathcal{X}_1 + \cdots + \mathcal{X}_n \sim \mathcal{D}$
    and $\mathcal{X}_i \sim \mathcal{D}'$.
\end{definition}

\textbf{Differential privacy}
Intuitively, a protocol between many clients and a server is differentially
private if it is hard to tell whether or not a
specific client is present in the protocol.
We use the standard notion of differential privacy
(DP)~\cite{DBLP:conf/tcc/DworkMNS06,DBLP:conf/eurocrypt/DworkKMMN06}: a randomized mechanism
$M$ is $(\epsilon, \delta)$-differentially private for $\epsilon \geq 0$ and
$\delta \in [0,1]$, if for any two datasets $X, X'$ differing on
one element, and for any subset $S$ of possible transcripts output by $M$,
$\Pr[M(X) \in S] \leq e^\epsilon \cdot \Pr[M(X') \in S] + \delta$
where the probability is over the randomness of $M$.
We use basic composition in DP~\cite[Lem. 2.3]{vadhan2017complexity}.
\begin{lemma}\label{lemma:basic_comp}
    If $M_1(X), \dots, M_k(X)$ are each $(\epsilon, \delta)$-DP
    with independent randomness, then $(M_i(X))_{i = 1}^k$ is $(k\epsilon,k\delta)$-DP.
\end{lemma}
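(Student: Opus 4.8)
The plan is to prove the statement directly for arbitrary $k$ in one shot, reducing everything to a pointwise ``mass decomposition'' of each mechanism's output distribution. Fix neighboring datasets $X, X'$ and a set $S$ of transcripts; by the definition of $(\epsilon,\delta)$-DP it suffices to show $\Pr[(M_1(X),\dots,M_k(X)) \in S] \le e^{k\epsilon}\Pr[(M_1(X'),\dots,M_k(X')) \in S] + k\delta$. Because the $M_i$ use independently sampled randomness, the joint law of $(M_1(X),\dots,M_k(X))$ is the product $P := P_1 \times \cdots \times P_k$, where $P_i$ is the law of $M_i(X)$, and likewise $(M_1(X'),\dots,M_k(X'))$ has law $Q := Q_1 \times \cdots \times Q_k$ with $Q_i$ the law of $M_i(X')$. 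All these distributions live on finite sets (as assumed throughout the paper), so I can argue directly with probability mass functions and avoid any measure-theoretic bookkeeping.

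The key step is the following decomposition, which I would isolate as a lemma: if $M_i$ is $(\epsilon,\delta)$-DP then $P_i$ can be split as $P_i = m_i + r_i$ into nonnegative vectors with $m_i(y) \le e^{\epsilon} Q_i(y)$ for every $y$ and $\sum_y r_i(y) \le \delta$. To see this, set $r_i(y) := \max\{0,\ P_i(y) - e^{\epsilon}Q_i(y)\}$ and $m_i(y) := P_i(y) - r_i(y) = \min\{P_i(y),\ e^{\epsilon}Q_i(y)\}$; then $0 \le m_i \le e^{\epsilon}Q_i$ pointwise by construction, and $\sum_y r_i(y)$ is exactly $P_i(B) - e^{\epsilon}Q_i(B)$ for $B := \{y : P_i(y) > e^{\epsilon}Q_i(y)\}$, which is $\le \delta$ by the DP guarantee of $M_i$ applied to the ordered pair $(X,X')$ and the set $B$ (note that only the one-sided inequality for this single ordered pair is needed, so no symmetry of the definition is invoked). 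I expect this to be the main obstacle, in the sense that it is the one place where one must resist the \emph{incorrect} temptation to read ``$(\epsilon,\delta)$-DP'' as ``the privacy loss exceeds $\epsilon$ with probability at most $\delta$'': that statement is false, and a naive ``peel off one $M_i$ at a time'' argument based on it inflates the additive term to something like $(1+e^{\epsilon}+\cdots+e^{(k-1)\epsilon})\delta$ rather than $k\delta$. The correct replacement is precisely this ``$e^{\epsilon}$-dominated part plus $\le\delta$ excess mass'' splitting.

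Given the lemma, the rest is a routine product expansion. Write $\rho_i := \sum_y r_i(y) \le \delta$, so $\sum_y m_i(y) = 1 - \rho_i$, and expand $P = (m_1 + r_1)\times\cdots\times(m_k+r_k)$ as a sum of $2^k$ products of factors. The single term $m_1\times\cdots\times m_k$ satisfies $(m_1\times\cdots\times m_k)(y_1,\dots,y_k) = \prod_i m_i(y_i) \le \prod_i e^{\epsilon}Q_i(y_i) = e^{k\epsilon}\,Q(y_1,\dots,y_k)$ pointwise, while the remaining $2^k - 1$ terms are nonnegative measures whose total masses sum to $1 - \prod_i(1-\rho_i) \le \sum_i \rho_i \le k\delta$, using $\prod_i(1-\rho_i) \ge 1 - \sum_i\rho_i$. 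Summing over $y \in S$ then gives $\Pr[(M_1(X),\dots,M_k(X)) \in S] \le e^{k\epsilon}\,Q(S) + k\delta$, as required; since $X$, $X'$, and $S$ were arbitrary, $(M_1,\dots,M_k)$ is $(k\epsilon,k\delta)$-DP.

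As an alternative I could instead prove the two-mechanism case by the same decomposition argument (allowing distinct parameters $(\epsilon_1,\delta_1)$, $(\epsilon_2,\delta_2)$ on the two factors, yielding $(\epsilon_1+\epsilon_2,\delta_1+\delta_2)$) and then induct on $k$, splitting $(M_1,\dots,M_k)$ as the independent composition of $M_1$ with $(M_2,\dots,M_k)$; this gives the same bound and matches the cited reference. For completeness I would remark that the version stated here is the non-adaptive one, which is exactly what \name\ uses, since each client samples its fake-query count with fresh independent randomness; the adaptive generalization follows by applying the same decomposition conditioned on the earlier outputs.
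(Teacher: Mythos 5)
Your proof is correct. Note that the paper does not actually prove this lemma --- it imports it by citation from Vadhan's survey (his Lemma 2.3) --- so there is no in-paper argument to compare against; what you have written is a sound, self-contained proof of the standard non-adaptive basic composition theorem, and it is essentially the textbook argument. The two load-bearing steps both check out: the splitting $P_i = m_i + r_i$ with $m_i = \min\{P_i, e^{\epsilon}Q_i\}$ pointwise dominated by $e^{\epsilon}Q_i$ and $\sum_y r_i(y) = P_i(B) - e^{\epsilon}Q_i(B) \le \delta$ for $B = \{y : P_i(y) > e^{\epsilon}Q_i(y)\}$ is exactly the right way to use the $(\epsilon,\delta)$ guarantee (and your warning against reading $\delta$ as a bound on the probability of large privacy loss is well taken --- that misreading is what produces the inflated $(1+e^{\epsilon}+\cdots+e^{(k-1)\epsilon})\delta$ term); and the mass bound $1 - \prod_i(1-\rho_i) \le \sum_i \rho_i \le k\delta$ on the $2^k-1$ cross terms is the Weierstrass product inequality, applied correctly. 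Your closing remark is also apt for this paper: the lemma is invoked in the proof of Theorem~\ref{thm:DP} only for non-adaptive composition across clusters and epochs, where the noise in each bucket is sampled independently, so the non-adaptive version you prove is all that is needed.
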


\textbf{Anonymous networks}
Anonymous networks (ANs), also known as anonymization networks or anonymous communication protocols, provide user anonymity through cryptographic techniques such as onion routing~\cite{DBLP:journals/cacm/GoldschlagRS99} and mixnets~\cite{DBLP:journals/cacm/Chaum81}. These protocols route messages through multiple intermediaries, called \emph{mixes} or \emph{hops}. Throughout this paper, we assume that ANs (1) remove all identifying information from client queries and (2) batch queries within one-second intervals before forwarding them to the next hop.

\subsection{Somewhat Homomorphic Encryption (SHE)}
We focus on SHE schemes based on the ring learning
with errors (RLWE) problem. Concretely, \name\ can be implemented using
any SHE scheme such as BFV~\cite{DBLP:journals/iacr/FanV12} or BGV~\cite{DBLP:journals/toct/BrakerskiGV14}.
A plaintext $m$ is a polynomial in a ring $R_t = \Zset_t[X]/(X^n+1)$ (with degree at most $n-1$).
The secret key $s$ is a
polynomial of degree $n-1$ with small coefficients, in $\{0,\pm 1\}$.
A ciphertext is a pair of polynomials $\ct = (c_0,c_1) = (a, as+\lfloor Q/t\rfloor m+e) \in R_Q^2$
 where $Q$ is ciphertext modulus and $R_Q := \Zset_Q[X]/(X^n+1)$.
Here $a$ is picked uniformly at random and $e$ is a noise polynomial with coefficients
sampled from a centered binomial distribution. A scheme satisfies its decryption formula: $c_0 + c_1s \mod Q=  \lfloor Q/t\rfloor m + e$.
 This noise polynomial's coefficients grow as we compute more
homomorphic operations on the ciphertext, and we can only decrypt correctly
 if $\|e\|_\infty < Q/2t$.

\textbf{Vectorized SHE.}
When a plaintext modulus $t$ is a prime satisfying $t = 1 \mod 2n$,
BGV and BFV support operations on plaintext vectors of degree $n$ encoded as polynomials.
Here, each ciphertext encrypts a plaintext vector $\vec v \in \Zset_t^n$,
and the homomorphic operations are vector-wise.
Vectorized SHE is crucial for homomorphic linear algebra.

\textbf{SHE operations and costs.}
The SHE schemes we consider support the following operations (in vector and polynomial arithmetric). Let $\enc(\vec v)$ denote a valid encryption of $\vec v\in\mathbb Z^n_t$. Given $\ct = \enc(\vec v)$, $\ct' = \enc(\vec v')$,
\begin{itemize}
    \item $\mathsf{CtCtAdd}(\ct, \ct')$
        returns $\ct_+ = \enc(\vec v+\vec v')$.
    \item $\mathsf{PtCtMult}(\ct, \vec v')$
        returns $\ct_\times = \enc(\vec v\odot\vec v')$.
    \item $\mathsf{CtCtMult}(\ct, \ct', ek)$
        returns $\ct_\times = \enc(\vec v\odot\vec v')$.
    \item $\mathsf{CtRotate}(\ct, r, ek)$ for $r \in [0,n/2)$ returns a ciphertext encrypting
            $(\mathsf{Rot}(\vec v), \mathsf{Rot}(\vec{\bar v})) \in \Zset_t^{n/2} \times \Zset_t^{n/2}$
        where $\mathsf{Rot}(\vec v)$ ($\mathsf{Rot}(\vec{\bar v})$) is
        $\vec v$ ($\vec{\bar v}$) cyclically shifted to the left by
        $r$ positions, if $\ct$ originally encrypted $\vec v' = (\vec v, \vec{\bar v})$.
        We can also homomorphically swap
        $(\vec v, \vec{\bar v})$.
\end{itemize}

$\mathsf{CtCtAdd}$ and $\mathsf{PtCtMult}$ are efficient, with the latter incurring noise growth proportional to $t$.
$\mathsf{CtCtMult}$ is costlier, causing large noise growth from polynomial multiplication and reduction modulo $Q$.
$\mathsf{CtRotate}$ add little noise but are slow due to key-switching.
Thus, minimizing $\mathsf{CtCtMult}$ and $\mathsf{CtRotate}$ is key for scalability.

\textbf{Evaluation keys.}
$\mathsf{CtCtMult}$ and $\mathsf{CtRotate}$ require evaluation keys that depend on the client's secret key. Typically, the server could store the evaluation keys once in typical SHE applications and reuse them across various client requests. However, the server learns the requests belonging to the same client. Looking ahead, this will break the privacy guarantee in \name; therefore, the client must send fresh evaluation keys with each request.
These keys contribute to the client's request size.


\subsection{Threat Model}\label{ssec:threat_model}
\name~has three kinds of entities: clients, an anonymization service, and the server. Client queries must remain private from all other parties under the following threat assumptions.

 The server is semi-honest: it follows the protocol but attempts to infer queries from observed traffic. Malicious clients may deviate by withholding fake queries (described in \autoref{sec:overview}), which can weaken privacy guarantees for others.

We assume the server and anonymization service do not collude. This is practical since widely deployed anonymization systems (e.g., Tor, mix networks) already support privacy-critical applications and have strong disincentives to collude with individual servers.

Under these assumptions, \name~ensures differential privacy for honest clients’ queries: the server’s view is statistically indistinguishable regardless of whether a particular honest client participates.

\textbf{Mitigating malicious clients.} To address clients who withhold fake queries, honest clients can over-provision by generating more fake queries than the minimum needed. This increases each honest client's fake queries by a factor of $1/(1-\alpha)$, where $\alpha$ is an upper bound on the fraction of malicious clients, and preserves the stated privacy guarantee. For example, tolerating 10\% malicious clients requires only 11\% more fake queries per honest client. See \autoref{sec:security} for details.

\section{Baseline Private Protocol with High Communication}
\label{sec:overview}

We will present our protocol in stages. First, in this section, we present a protocol which provides the required privacy but incurs high communication overhead. This protocol explains sampling of fake queries, and the scheduling of fake and real queries, which help us prove client privacy. Then in Section~\ref{sec:protocol}, we build on this protocol and reduce communication by using SHE. Lastly, Section~\ref{sec:security} contains the proof of security for our final protocol.

We begin by discussing the main ingredients of approximate nearest neighbors search - embeddings and clustering.

\textbf{Embeddings.} Similar to state-of-the-art vector databases and prior PNN protocols, \name\ uses embeddings to represent unstructured data as $d$-dimensional vectors. These vectors preserve semantic similarity, allowing nearest-neighbor search over text, images, or videos. \name\ is compatible with modern embedding models such as BERT~\cite{DBLP:conf/naacl/DevlinCLT19}.

\textbf{Clustering-based nearest neighbors search.} The server stores $N$ entries, each containing a document (e.g., text or image) and associated metadata. In the offline phase, the server maps each document to its embedding. During a query, the client embeds its input and seeks the nearest server embedding and corresponding metadata. Similar to previous PNN protocols, \name\ employs a clustering-based approximate nearest-neighbor search to retrieve relevant results efficiently.
The protocol has the following flow:
\begin{itemize}
 \item \textbf{Initialization.} At initialization, the server divides the document embeddings into $K$ clusters using $K$-means clustering,
 a heuristic-based technique that partitions data into $K$ disjoint clusters $\{C_i\}_{i = 1}^K$, and outputs these clusters and their centroid embeddings $\{c_i\}_{i = 1}^K$. The server also divides the document metadata into the same clusters.
 The server sends the centroid embeddings to the client.
 \item \textbf{Query.} To make a query, the client first locally finds $\Delta$ clusters nearest to the query embedding, $C'_1,\cdots,C'_{\Delta}$.
 For this, the client locally computes the cosine similarity between the query and each of the centroid embeddings and picks clusters
 whose centroid has the highest similarity. The client then sends the nearest clusters $C'_1,\cdots, C'_{\Delta}$ to the server.
 \item \textbf{Response.} The server responds by sending all the embeddings and their metadata in clusters $ C'_1,\cdots, C'_{\Delta}$ to the client.
 \item \textbf{Local client computation.} The client locally computes cosine similarity between the query embedding and each embedding in clusters $ C'_1,\cdots, C'_{\Delta}$, sorts the scores to find the closest embedding, and retrieves the corresponding metadata.
\end{itemize}

Clustering trades accuracy for performance: more clusters reduce computation and response size, but increase the chance of missing the true nearest neighbor. Hence, choosing an appropriate number of clusters is crucial for system usability.

This protocol is not private: to protect the client's query, the protocol must hide the nearest clusters from the server.
Revealing the closest clusters could expose information about the query since the cluster centroids reveal semantic information.

\subsection{Hiding Near Clusters using Differential Privacy}

Hiding the nearest clusters from the server is a key performance bottleneck in previous SHE-based schemes, such as Tiptoe.
Tiptoe does not support querying only the clusters the client is interested in. This is because skipping a cluster will indicate to the server that the client's query is not in those clusters. Thus, to hide the query clusters, the client must send encrypted material for \emph{each} cluster. Similarly, the server performs expensive SHE operations against all the clusters.  Computing over the entire database hides the nearest cluster completely, but this results in prohibitively poor performance.

To overcome this issue, we propose an approach based on differential privacy. 
The key advantage is that \emph{the protocol no longer requires touching all the clusters}, which results in a significant performance improvement.
While differential privacy offers a different privacy guarantee than the full obliviousness provided by Tiptoe, it remains a widely accepted privacy notion. Our approach provides $(\epsilon, \delta)$-DP. Intuitively, this guarantees that the maximum privacy
loss due to the client's queries is essentially bounded by $e^\epsilon$ with probability at least $1-\delta$.

We observe that in large-scale applications, many clients make queries concurrently. In other words, these applications naturally fit a batch processing model, where the server processes a batch of client queries at a time. For such applications, we design a protocol that works on a batch of queries from non-coordinating clients. Our protocol
hides a particular client query within a batch of queries from all participating clients.

Our protocol considers $U$ users and a batch of size $U\cdot\Delta$. We require the following steps.

\textbf{Anonymized queries.} All queries in a batch are anonymized. To achieve this, each client routes their queries through an existing anonymization service. We assume, as is standard, that the anonymization service does not collude with the server. Since many such services already exist and are widely trusted, clients can choose their preferred provider. This makes our protocol more practical than standard multiparty computation, which typically requires dedicated, trusted infrastructure.

Even with anonymized queries, the server may still infer information by analyzing traffic patterns. For instance, if the server knows that only a specific client is interested in a particular cluster, observing access to that cluster could reveal the client's identity. To address this, our protocol ensures that queries are deniable.

To achieve such deniability, we make the following changes:

\textbf{Fake queries.} Each client makes a few fake queries.
 Specifically, for each cluster, the client samples $Z\gets \mathcal{D}$ fake queries.
 We choose a distribution $\mathcal{D}$ such that the server's overall view after seeing all queries is differentially private.
 In expectation, the client makes $\mathbb{E}[\mathcal{D}]$ fake queries per cluster, for a total of $K\cdot\mathbb{E}[\mathcal{D}]$.

\textbf{Epochs.} The protocol operates in fixed-duration \emph{epochs}, each defining a batch of queries from $U$ clients. The protocol guarantees that the server processes all queries by the end of the epoch. Each client receives results by the end of the epoch. Note that even in non-private systems, fixed infrastructure imposes variable query latency; some queries inevitably complete later than others, so epoch-based processing aligns naturally with the resource constraints of large-scale deployments. The per-client overhead of fake queries scales inversely with $U$, so high query volume directly reduces overhead. Epochs do not require synchronization across clients; however, participating clients must stay online throughout an epoch. Each client sends cluster IDs as independent queries rather than together.

\textbf{Timing channel protection.} Even with epochs, the server can exploit arrival times to link queries to specific clients. For example, if a client is known to have a poor network, its queries will tend to arrive towards the end. To suppress this leakage, each epoch is divided into one-second slots. For each query (real or fake), the client independently picks a random slot. This ensures that the server cannot link query timing to a specific client, as the random scheduling and fixed epoch duration decouple query arrival times from client identity.

In \autoref{sec:security}, we prove that sampling the number of fake queries from a negative binomial distribution with parameters $p = e^{-0.2\epsilon/\Delta}$ and $r = 3(1 + \log(1/\delta))/U$, along with a random query schedule, guarantees $(2\epsilon, 2\Delta\delta)$-differential privacy.

\textbf{Drawback.} Even though the protocol conceals the closest clusters, it incurs high communication costs. Concretely, the server transmits embeddings and metadata for all the nearest clusters to the client, resulting in substantial communication overhead.

\textbf{Metadata Fetch.}
After determining the index of the nearest entry, the client retrieves the metadata associated with it. When the metadata per entry is small, the server returns the metadata for the entire cluster with each query. Note that this does not leak any additional information since the server already knows the query cluster. The client then selects the metadata associated with the most relevant entry locally. However, if the metadata is large (in kilobytes), it becomes infeasible to download the entire cluster.

\section{Optimizing Communication using SHE}\label{sec:SHE-instance}
\begin{figure}[!ht]
\centering
  \includegraphics[keepaspectratio,width=\columnwidth]{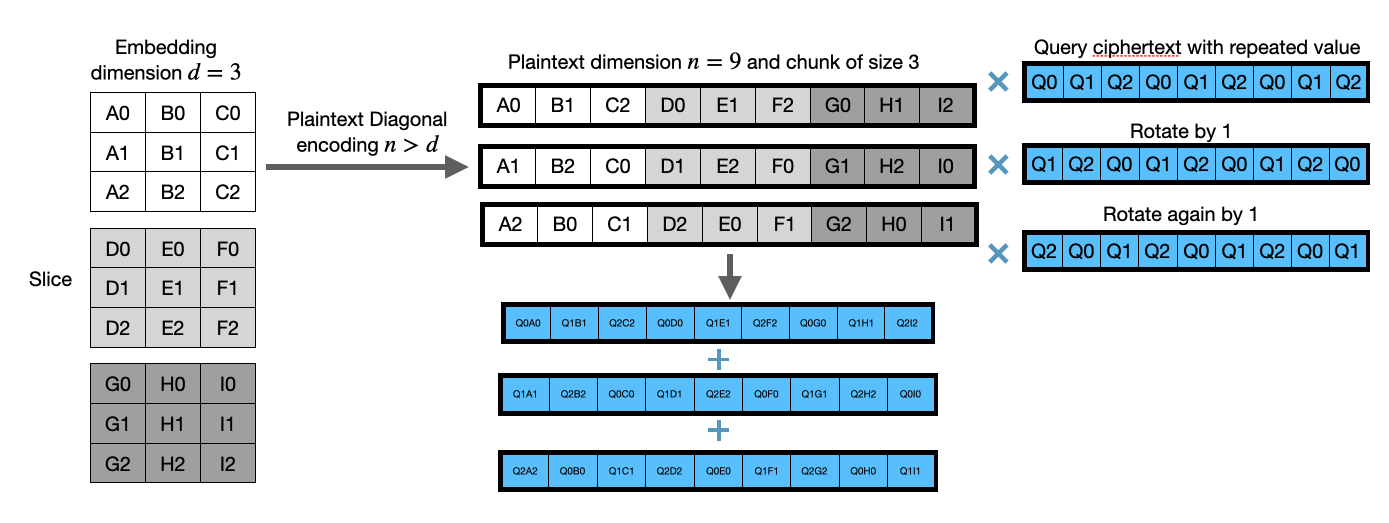}
  \caption{An example dot-product computation for a single cluster with three slices ($d=3$, $n=9$). Plaintext vector
  $j$ packs the $j$-th wrap-around diagonal of every slice. The server aligns the query only once
  for all slices; the output is a single ciphertext whose slots are the dot products of the query with each embedding in
  the cluster.
  }
  \label{fig:pec}
\end{figure}

We use SHE to reduce the communication of this protocol by computing encrypted similarity scores server-side.
In this section, we discuss the details of our efficient SHE-based instantiation as well as describe
our novel optimizations to MulPIR~\cite{DBLP:conf/uss/AliLP0SSY21}, used in metadata fetch for large
metadata entries.

\subsection{Reducing Communication in Nearest Neighbor Search}

At a high level, instead of sending only the closest centroids, each client query (real or fake) includes the query embedding.
An immediate problem with this approach is that these embeddings preserve the semantic meaning of the original data while also containing sensitive information about it. Recent work has shown that embeddings can reveal up to 92$\%$ of the original data~\cite{DBLP:conf/emnlp/MorrisKSR23}.
Therefore, the client cannot send the embedding to the server in plaintext. Instead, the client first encrypts this embedding using SHE.
To maintain anonymity, for each real query, the client generates a new SHE encryption of the query embedding. Similarly, for each fake query, the client generates a new encryption using a fake value.

We rely on vectorized, RLWE-based SHE described in \autoref{sec:background} to hide a query embedding.
At a high level, the client sends an SHE encryption of its embedding to the server, and the server homomorphically
computes similarity scores between the query embedding and all the embeddings in the nearest cluster. The server
then returns encrypted scores to the client.

Looking ahead, the server will need to compute many encrypted cosine similarity scores via SHE dot-products.
We apply the \emph{diagonal encoding} of Halevi and Shoup~\cite{DBLP:conf/crypto/HaleviS18}
in order to efficiently compute these dot-products under SHE: we organize each cluster into
square matrices (slices), then encode these matrices as plaintext vectors according to
their diagonals (\autoref{fig:pec}).

\textbf{Server database.} Recall that the server database is divided into $K$ clusters.
Each cluster consists of at most $N'$ embeddings.
For simplicity, we assume that for a given cluster $C_i$, the dataset is represented as a cube $D_{C_i}$ of
dimensions $\lceil N'/d \rceil \times d \times d$, where $d$ is the embedding dimension.
Each cube consists of $\lceil N'/d \rceil$ slices, each of which is a matrix of dimensions $d \times d$, and each column within a slice is an embedding.
Denote elements of $j$-th embedding as $e_j= [e_j^{0},\cdots, e_j^{d-1}] $.
For now, we assume that the plaintext dimension $n=d$ and each diagonal within a slice is a separate plaintext.
That is, each slice contains $d$ plaintext vectors, and the $j$-th vector consists of entries
\[p_j=[e_0^{j \bmod d}, e_1^{(j+1) \bmod d}, \cdots, e_{d-1}^{(j+d-1) \bmod d}].\]
Equivalently, $p_j$ walks diagonally through the slice: it picks one entry from each embedding, stepping the row index by 1 (with wrap-around).
The client encrypts the query embedding $\hat{q}=[q^0,q^1,\cdots,q^{d-1}]$ in a single SHE ciphertext.

\textbf{Secure dot-product computation.}
Recall from \autoref{sec:background} that the server needs to compute a dot-product to calculate the cosine similarity between two normalized embeddings.
To achieve that, the server must align the query so that each element-wise multiplication pair corresponds to the elements of the query and plaintext vectors, i.e.,
every multiplication is of the form $e_j^{i \bmod d}\cdot q^i$.
The server uses homomorphic rotation to align the elements of the encrypted query vector.
Specifically, for a given slice, starting with the first plaintext vector $j=0$,
the server performs \emph{vectorized ciphertext-plaintext} multiplication between the query ciphertext and the first plaintext vector.
This results in encrypted $p_0 \cdot\hat{q}= [e_0^{0}\cdot q^0, e_1^{1}\cdot q^1, e_2^{2}\cdot q^2, \cdots, e_{d-1}^{d-1}\cdot q^{d-1}]$.
The first multiplication requires no rotation because the query is already aligned with the first vector.
The elements of the second plaintext vector $j=1$ are $p_1 = [e_0^{1}, e_1^{2}, e_2^{3}, \cdots, e_{d-1}^{0}]$.
Hence, the server cannot multiply it directly by the query vector.
The server first homomorphically rotates the query vector one slot to the left, resulting in query vector $\hat{q}^{1}= [q^1,q^2,\cdots,q^{0}]$, then
multiplies the result: $p_1\cdot \hat{q}^{1} = [e_0^{1}\cdot q^1, e_1^{2}\cdot q^2, e_2^{3}\cdot q^3, \cdots, e_{d-1}^{0}\cdot q^{0}]$.
The server then repeats this for all the remaining plaintext vectors in a slice: rotating the query vector in increments
of one and multiplying it by the plaintext vector.
By the end, the server has $p_0\cdot \hat{q}, p_1\cdot \hat{q}^{1}, \cdots, p_{d-1}\cdot \hat{q}^{d-1}$ encrypted products.
The server then sums all of these vectors together, resulting in a single encrypted vector in which each slot is a dot
product between the client query vector and one of the embeddings in the slice.

\textbf{Database and query packing.}
The above description assumes that the plaintext dimension $n$ equals the embedding dimension $d$.
However, $n$ is often significantly larger than $d$. In this setting, each plaintext vector could hold multiple diagonals.
We assume that $n$ is divisible by $d$. When that is not the case, we pad the embeddings to the next integer that divides $n$.
We can pack $n/d$ diagonals across the slices in a single plaintext vector.
From here on, we denote a set of $d$ plaintexts as a chunk, and after encoding, each cluster has $N'/n$ chunks of plaintexts.
Concretely, the $j$-th plaintext vector of the first chunk holds the $j$-th diagonal of the first $n/d$ slices.

The client query encrypts a plaintext vector of $n/d$ repetitions of the query embedding.
The server still aligns the query embedding by homomorphically rotating it left by increments of 1.
This step remains the same as in the previous case.
The server then performs ciphertext-plaintext multiplications and sums plaintexts chunk-wise (summing the first $d$ plaintexts, and so on).
The number of ciphertext-plaintext multiplications is now $\lceil N'/n\rceil \cdot  d$ while the response consists of $\lceil N'/n\rceil$ ciphertexts. In \autoref{fig:pec}, we show an example of packing and multiplication for $n=9$ and $d=3$.

Recall from \autoref{sec:background} that homomorphic rotation is
costly. In the above explanation, the server performs $d$ rotations even after packing to align the query.
In \name\, we use the standard baby-step giant-step (BSGS) optimization~\cite{DBLP:conf/crypto/HaleviS18}
to minimize the number of rotations: for $d$-dimensional inner products, BSGS
requires only $\sqrt d$ rotations for step sizes of $1$ (baby step) and $\sqrt d$ (giant step).
See Algorithm 1 in~\cite{DBLP:conf/crypto/HaleviS18}.

\subsection{Reducing Communication in Metadata Fetch}\label{sec:opt}

When metadata is large (in kilobytes), instead of downloading metadata for the entire cluster, the client obliviously retrieves metadata for only the most relevant entry within a single cluster. To achieve that, we employ keyword PIR.

Our construction is based on cuckoo hashing and SHE-based PIR. Concretely, after receiving encrypted scores, the client finds the most relevant embedding and its cluster locally. The client then uses the identity of this entry to generate a keyword PIR query against that cluster.
The metadata query is then the index of the cluster containing this entry, as well as the PIR keyword query.

\textbf{Protecting privacy in metadata fetch.} Note that we must still protect the query to the cluster.
Therefore, to fetch metadata, clients participate in an additional epoch with $\Delta=1$.
As in previous epochs, we also require anonymized queries; clients generate fake queries and execute them at random slots within an epoch.

\textbf{Details of keyword PIR.}
 The server partitions the metadata entries into $K$ clusters.
 The server encodes each entry as a keyword-value pair, where the keyword is the entry ID and the value is the metadata.
 For ease of explanation, we assume that each cluster consists of at most $N'$ entries.
The server uses cuckoo hashing to map the $N'$ entries
 into a table of size $(1+u)N'$ using two random hash functions $h_1$ and $h_2$, where typically $u \approx 0.5$.
 The guarantee is that if a \emph{keyword} is present in the database,
 it must be located at index $h_1(\textit{keyword})$ or $h_2(\textit{keyword})$ in the table.
 The client uses index PIR to privately fetch entries at these two indices and locally find
 the index containing the entry and associated data.
  \name\ uses MulPIR~\cite{DBLP:conf/uss/AliLP0SSY21} as the underlying index PIR, an efficient single-server PIR scheme based on SHE.
 We emphasize that even though we instantiate \name\ with MulPIR,
 it is compatible with any modern efficient PIR scheme.

  \textbf{Overview of MulPIR.}
 The server arranges the database entries in a $d_1 \times d_2$ matrix,
 where $d_1 \cdot d_2 = N'$.
 For this explanation, we assume that each metadata entry has the same size as the SHE plaintext.
 If entries are larger than a plaintext, then we
 split each entry across multiple plaintexts. If entries are smaller, then we pack multiple entries
 into a single plaintext when applicable.
 A client query in MulPIR is a single
 ciphertext $\hat{q}$ encoding the target entry's indices along both dimensions. Given a query, the server does the following.
  \begin{enumerate}
   \item Expand $\hat{q}$ into two indicator vectors of lengths $d_1$ and $d_2$. The server uses the oblivious expansion
 algorithm given in~\cite{DBLP:conf/uss/AliLP0SSY21}, whose cost is dominated by
   $(d_1 + d_2)$ substitution operations (Galois automorphisms followed by key-switching).
   \item Process the first dimension using $d_1 \cdot d_2$ plaintext-ciphertext multiplications and additions, producing a $d_2$-length ciphertext vector.
   \item Process the second dimension using $d_2$
 ciphertext-ciphertext multiplications and additions, producing a single ciphertext that is the response.
  \end{enumerate}

  \textbf{PIR Optimizations.}
 We present several optimizations to MulPIR that yield a $2\text{-}3\times$ improvement in server computation and communication.

\paragraph{Splitting cuckoo hash tables}
As described above, keyword PIR requires the client to fetch entries at both $h_1(\textit{keyword})$ and $h_2(\textit{keyword})$ in the cuckoo hash table. The cuckoo hash table has $(1+u)N'$ slots for $N'$ entries, so each lookup requires a separate index PIR query on the full table, costing $\approx 2(1+u)N'$ plaintext-ciphertext multiplications total. We optimize this by splitting the database entries into two separate tables, one for $h_1$ and one for $h_2$. Each table contains $\approx N'/2$ entries, and we run a single index PIR query on each, skipping empty slots. This reduces the total cost to $\approx N'$ plaintext-ciphertext multiplications, a $3\times$ improvement for $u = 0.5$.

\paragraph{Uneven dimensions}
For a database with $N' = d_1 \cdot d_2$ entries, the server's computation in MulPIR consists of three types of operations:
\begin{itemize}
  \item Step~1 (expansion): $(d_1 + d_2)$ substitution operations to expand the query ciphertext into indicator vectors.
  \item Step~2 (first dimension): $d_1 \cdot d_2$ plaintext-ciphertext multiplications and additions.
  \item Step~3 (second dimension): $d_2$ ciphertext-ciphertext multiplications and additions.
\end{itemize}
As shown in \autoref{tab:SHE}, plaintext-ciphertext multiplications are inexpensive ($0.02$ ms), while substitution operations cost $t_s = 0.5$ ms and ciphertext-ciphertext multiplications cost $t_m = 2.5$ ms, so $\gamma = t_m / t_s = 5$. The database size fixes the plaintext-ciphertext operations in step~2 and cannot be reduced. However, we observe a trade-off between the substitution operations in step~1 and the ciphertext-ciphertext multiplications in step~3: by choosing uneven dimensions with $d_1 > d_2$, we increase the cheaper substitution operations while reducing the expensive ciphertext-ciphertext multiplications.

Concretely, the server's cost for expansion (step~1) and the second dimension (step~3) is $(d_1 + d_2) t_s + d_2 t_m$.
Dividing by $t_s$ and substituting $\gamma = t_m/t_s$, the total cost in substitution operations is
\[d_1 + (\gamma + 1)d_2.\]
Since $d_1 \cdot d_2 = N'$, we substitute $d_1 = N'/d_2$ to express the cost as a function of $d_2$ alone:
\[\frac{N'}{d_2} + (\gamma + 1)d_2.\]
Minimizing over $d_2$ gives
$d_2 = \sqrt{\frac{1}{\gamma + 1}}\sqrt{N'} \approx 0.41\sqrt{N'}$ for $\gamma = 5$.

This optimization improves MulPIR run times by $30\text{-}40$\%.

\paragraph{Linearizing parts of query expansion}
MulPIR uses an optimized query expansion technique originated by Angel et al.~\cite{DBLP:conf/sp/AngelCLS18}.
In short, the scheme encodes a query index in a plaintext polynomial, then expands it
into $k$ ciphertexts using $k$ ciphertext rotation operations
(substitutions followed by key-switching). We noticed that
some of these expansions can be substituted with linear operations:
call $\ct \pm \ct'$ where $\ct'$ is $\ct$ rotated. In general,
we observe a $10\text{-}25$\% improvement in MulPIR's expansion from our optimization.

\paragraph{Reducing evaluation key size}
To expand a MulPIR query, the server performs repeated substitution operations on the client's query ciphertext using Galois automorphisms $\sigma_k: x \mapsto x^k$ in the ring $R_Q = \Zset_Q[X]/(X^n+1)$. Each substitution requires its own evaluation key sent by the client. Typically, the client sends these keys once, and the server reuses them across queries. However, in \name, reusing keys would allow the server to link queries to the same client, violating our privacy requirement. Therefore, the client must send fresh evaluation keys with each query, and these keys dominate the request size. Note that composing two substitutions yields $\sigma_k \circ \sigma_j = \sigma_{k \cdot j \bmod 2n}$. Therefore, a single key for $\sigma_k$ suffices to compute $\sigma_{k^2 \bmod 2n}$ on any ciphertext by applying the substitution twice.

The expansion algorithm requires substitution keys for Galois elements $g_a = n/2^a + 1$ for
$a = 0, 1, \ldots, c{-}1$, where $c = \lceil \log_2 d \rceil$ and $d$ is the expansion dimension.
For $n = 4096$, these are $g_0 = 4097, g_1 = 2049, g_2 = 1025, g_3 = 513, g_4 = 257, g_5 = 129$.
Note that $g_a^2 \bmod 2n = g_{a-1}$ when $a \leq \lfloor(\ell{-}1)/2\rfloor$ (where $n = 2^\ell$).
For $n = 4096$, this holds for $a \leq 5$, e.g., $129^2 = 16641 \equiv 257 \pmod{8192}$, but fails for $a > 5$.

We therefore use a hybrid approach in which the client sends a single key for $g_5$,
and the server derives $g_0,\ldots,g_4$ from $g_5$ by repeated squaring.
For the remaining higher levels ($g_6, g_7, \ldots$), the client sends explicit keys with each request.
For example, an expansion of dimension $2^{12}$ normally requires $12$ keys; with our optimization, the client sends $7$ instead, reducing communication by approximately $40\%$.

This incurs a small computational overhead. To derive $g_{5-k}$ from $g_5$, the server applies the substitution $2^k$ times instead of once. However, the derived keys are used at early levels of the expansion tree where few ciphertexts exist (levels $0$ through $3$ have $1, 2, 4, 8$ ciphertexts respectively), so the total overhead is small. Concretely, for expanding to $2^{12}$ ciphertexts, the optimization adds $129$ extra substitutions over the normal $4095$, an overhead of approximately $3\%$.

\begin{table}[htbp]
    \centering
    \caption{Experimental computation cost and noise growth
 of each BFV homomorphic operation, measured on an Intel Xeon w3-2423 (single threaded).}
    \begin{tabular}{c  c  c}
    \hline
 Operation & Time (ms) & Noise added (bits) \\
    \hline
    $\mathsf{CtCtAdd}$  & $0.004$ & $0.5$\\
    $\mathsf{PtCtMult}$ & $0.02$ & $20$\\
    $\mathsf{CtCtMult}$ & $2.5$ & $26$\\
    $\mathsf{CtSub}$\footnotemark & $0.5$ & $0.5$\\
    \hline
    \end{tabular}
    \label{tab:SHE}
\end{table}
\footnotetext{Same operation as $\mathsf{CtRotate}$; we use $\mathsf{CtSub}$ for substitution operations in PIR query expansion.}

\begin{figure}[t]

\hrule height 0.4pt

\noindent\textbf{Input:} Semantic search database $\vec{D} = \{(\vec{e}_i, \textsf{meta}_i)\}_{i \in [N]}$, number of clusters $K$

\noindent\textbf{Output:} Plaintext database $\vec{D}_{\text{ptxt}}$, cluster centroids $\vec{c}$

\hrule height 0.4pt

\noindent\textbf{Notation:} Let $g = \sqrt{d}$, $h = \sqrt{d}$. Let $\vec{E}[j][k][l]$ denote the embedding cube, where:
\begin{itemize}
    \item $j \in \left[\left\lceil \frac{|\vec{E}|}{d} \right\rceil\right]$ indexes the slices,
    \item $k \in [d]$ indexes the embedding within a slice,
    \item $l \in [d]$ indexes the slot within an embedding.
\end{itemize}
\begin{algorithmic}[1]
\State $(\vec{c}, \vec{C}) \gets \mathsf{KMeansClustering}(\vec{D}, K)$
\For{$i \in [K]$} \Comment{Iterate over each cluster}
    \State $E \gets \vec{C}[i]$ \Comment{Embeddings cube}
    \State $s \gets \left\lceil |E| / d \right\rceil$ \Comment{Number of slices}
    \State $\vec{P} \gets []$
    \For{$j \in [s]$} \Comment{Iterate over slices}
        \If{$j \bmod s = 0$}
            \For{$k \in [d]$}
                \State $\vec{D}_{\text{ptxt}}[i][\lfloor j/s \rfloor] \gets \mathsf{EncodePtVec}(\vec{P}[k])$
            \EndFor
            \State $\vec{P} \gets []$
        \EndIf
        \For{$l \in [d]$} \Comment{Iterate over embeddings}
            \For{$k \in [d]$}
                \State $\vec{P}[j \cdot d + l][k] \gets \vec{E}[j][k][(k + l) \bmod d]$ \Comment{The $l^{\text{th}}$ diagonal}
            \EndFor
        \EndFor
    \EndFor
    \For{$k \in [s]$}
        \For{$a \in [h]$} \Comment{BSGS giant-step index}
            \For{$b \in [g]$} \Comment{BSGS baby-step index}
                \State $\vec{D}_{\text{ptxt}}[a][k][b] \gets \mathsf{PtRotate}(\vec{D}_{\text{ptxt}}[a][k][b], -g \cdot a)$
            \EndFor
        \EndFor
    \EndFor
\EndFor
\State \Return $\vec{D}_{\text{ptxt}}, \vec{c}$
\end{algorithmic}
\rule{\columnwidth}{0.4pt}
\caption{Server Initialization Algorithm}
\label{alg:servinit}
\end{figure}

\section{\name\ Protocol}
\label{sec:protocol}
In this section, we describe the complete \name\ protocol for a single epoch.
We assume that $U$ honest clients are present to query within that epoch.

\textbf{Server initialization.}
Overview of server initialization is shown in \autoref{fig:serverinit}.
The server database consists of $ N$ precomputed embeddings and their metadata.
The server uses the algorithm in \autoref{alg:servinit}
to encode the database with $N$ embeddings into an index.
Specifically, the algorithm divides the input embeddings into $K$ clusters using K-means clustering.
The algorithm then iterates over each cluster separately.
The $i^{th}$ cluster $C[i]$ consists of $\lceil |C[i]|/d \rceil$ slices.
The algorithm then packs diagonals from $n/d$ slices into $d$ plaintexts.
Once plaintext chunks are generated, the algorithm iterates over each chunk in a group of $g=\sqrt{d}$ plaintexts,
rotating each group in increasing multiples of $-g$. These rotations are done for the baby-step giant-step
(BSGS) optimization~\cite{DBLP:conf/crypto/HaleviS18} that reduces rotations at query time to $\sqrt{d}$.
Note that these rotations are performed on plaintexts, so their cost is negligible.
We only perform these rotations when the number of chunks is less than $\sqrt{d}$.

\textbf{Query generation.} Each client uses the algorithm in \autoref{alg:clientquery} to generate queries
and their schedule at the start of an epoch. The high level overview is shown in the \autoref{fig:dp}
The client locally has a set of cluster centroids $c$ that the server generates during initialization.
The algorithm first generates independent queries for at most $\Delta$ real query embeddings.
To generate a $j$'th real query, the algorithm picks a centroid $\id_l$ nearest to $j$'th query embedding.
Then, the algorithm copies the embedding $n/d$ times into a plaintext vector $\vec{v}$ of size $n$.
This is to take advantage of SHE packing.
The algorithm then generates a fresh SHE secret key $\sk$ and evaluation keys $\evk$,
and encrypts $\vec{v}$ using $\sk$.

Next, the algorithm generates fake queries. Here, the algorithm iterates over each cluster,
sampling a number of fake queries from the negative binomial distribution $\mathcal{NB}(r/U, p)$.
To generate a single fake query, the algorithm generates a fresh
SHE secret key $\sk$ and evaluation keys $\evk$, and encrypts $\vec 0$ using $\sk$.
Note that the client must generate a fresh evaluation key to keep each query anonymous.
Also, each evaluation key must include keys for rotation steps $1$ and $g$,
as server computation rotates by these steps.

The algorithm then permutes the real and fake queries list $\vec Q$ and generates a random schedule for the queries.
To generate the schedule $\vec S$, for each query $i\in Q$, the algorithm picks a random slot $t$
independently and appends $i$ to $S[t]$.
In other words, the schedule $\vec S$ maps slot indices to the query indices.
At a particular time slot $t$ within an epoch, the client will make queries
(independently) indicated by $S[t]$.

\textbf{Server computation.} The server uses the algorithm in \autoref{alg:response}
to generate a response for every query received.
The algorithm first aligns the query ciphertext $\ct$.
That is, first it copies the query ciphertext $g$ times and then rotates each copy left in increments of one, in total $g$ rotations.
Then, the algorithm performs a dot-product between rotated query ciphertexts and the plaintext vectors for cluster $\id$.
Recall that we ensure each cluster is encoded into $d$ plaintexts.
The algorithm iterates in groups of $g$ plaintexts; within each group,
multiply the $i$-th plaintext with the $i$-th
copy of the rotated query and sum the resulting ciphertexts.
These iterations yield $h=\sqrt{d}$ ciphertexts.
After that, the algorithm iterates over the $h$ resulting ciphertexts,
rotating each right in increments of $g$ and summing all of them into a single ciphertext $R$.
This step requires a total of $h$ rotations. The algorithm returns $R$ and the cluster $\id$'s metadata.

\begin{figure}[t]
\hrule height 0.4pt

\noindent\textbf{Input:} Client query $q$, cluster centroids $c = \{\vec{c}_l\}_{l \in [K]}$, number of clients $U$

\noindent\textbf{Output:} Schedule $\vec{S}$
\hrule height 0.4pt
\begin{algorithmic}[1]
\State $\vec{v} \gets []$
\For{$l \in [\lceil n/d \rceil]$} \Comment{Repeat $q$}
    \State $\vec{v} \gets \vec{v} \,||\, \vec{q}$
\EndFor
\For{$l \in [\Delta]$} \Comment{Generate real queries}
    \State $(\sk_l, \evk_l) \gets \kg(1^\lambda)$
    \State $\ct_l \gets \mathsf{EncSemPt}(\sk_l, \vec{v})$
    \State Pick $l$-th nearest cluster $\id_l$ of $\vec{q}$ from $c$
    \State $Q \gets Q \,||\, (\ct_l, \evk_l, \id_l)$
\EndFor
\For{$i \in [K]$}
    \State $F_i \gets \mathcal{NB}(r/U, p)$ \Comment{Number of fake queries}
    \For{$j \in [F_i]$} \Comment{Generate fake queries}
        \State $(\sk'_j, \evk'_j) \gets \kg(1^\lambda)$
        \State $\ct'_j \gets \mathsf{EncSemPt}(\sk'_j, \vec{0})$
        \State Pick random cluster $\id_j$ from $c$
        \State $Q \gets Q \,||\, (\ct'_j, \evk'_j, \id_j)$
    \EndFor
\EndFor
\State $\vec{S} \gets \mathsf{RandScheduleGen}(Q)$ \Comment{Generate schedule}
\State \Return $\vec{S}$
\end{algorithmic}
\rule{\columnwidth}{0.4pt}
\caption{Client Query Generation Algorithm}
\label{alg:clientquery}
\end{figure}

\begin{figure}[t]
\hrule height 0.4pt

\noindent\textbf{Input:} List $\vec{Q}$ of real and fake queries, epoch length $T$

\noindent\textbf{Output:} Schedule $\vec{S}$

\hrule height 0.4pt

\begin{algorithmic}[1]
\State Initialize a schedule $\vec{S}$ as $T$ empty lists
\For{$j \in [|\vec{Q}|]$}
    \State $i \gets \text{UniformRandom}([T])$ \Comment{Sample random slot}
    \State $\vec{S}[i] \gets \vec{S}[i] \,||\, j$
\EndFor
\State \Return $\vec{S}$
\end{algorithmic}
\rule{\columnwidth}{0.4pt}
\caption{Random Schedule Generation Algorithm}
\label{alg:schedule}
\end{figure}


\begin{figure}[t]

\hrule height 0.4pt

\noindent\textbf{Input:} Query $(\ct, \evk, \id)$

\noindent\textbf{Output:} Encrypted scores and data $r$

\hrule height 0.4pt

\begin{algorithmic}[1]
\For{$j \in [g]$} \Comment{Align query}
    \State $\ct'[j] \gets \mathsf{CtRotate}(\ct, j, \evk)$
\EndFor
\State $\vec{D} \gets \vec{D}_{\text{ptxt}}[\id]$
\State $\vec{R} \gets []$
\For{$k \in [m]$} \Comment{Iterate over chunks}
    \State $S \gets 0$ \Comment{Initialize accumulator}
    \For{$i \in [h]$} \Comment{Giant step}
        \For{$j \in [g]$} \Comment{Baby step summation}
            \State $S \gets S + \mathsf{CtRotate}(\vec{D}[k] \cdot \ct'[j], g \cdot i, \evk)$
        \EndFor
    \EndFor
    \State $\vec{R} \gets \vec{R} \,||\, S$
\EndFor
\State $\textsf{data} \gets \textsf{data}_{\id}$
\State \Return $r := (\vec{R}, \textsf{data})$
\end{algorithmic}
\rule{\columnwidth}{0.4pt}

\caption{Server Computation Algorithm}
\label{alg:response}
\end{figure}
\begin{figure}[htbp]
    \centering
    \includegraphics[keepaspectratio,width=\textwidth]{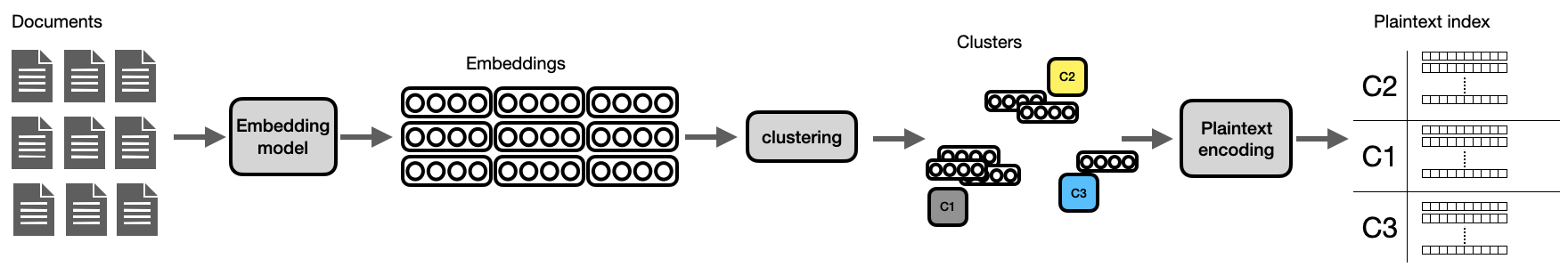}
     \caption{Overview of server initialization. First, the server converts documents into embeddings, then clusters these embeddings, and then encodes them into an index consisting of SHE plaintexts.}
    \label{fig:serverinit}
  \end{figure}

  \begin{figure}[htbp]
    \centering
    \includegraphics[keepaspectratio,width=\textwidth]{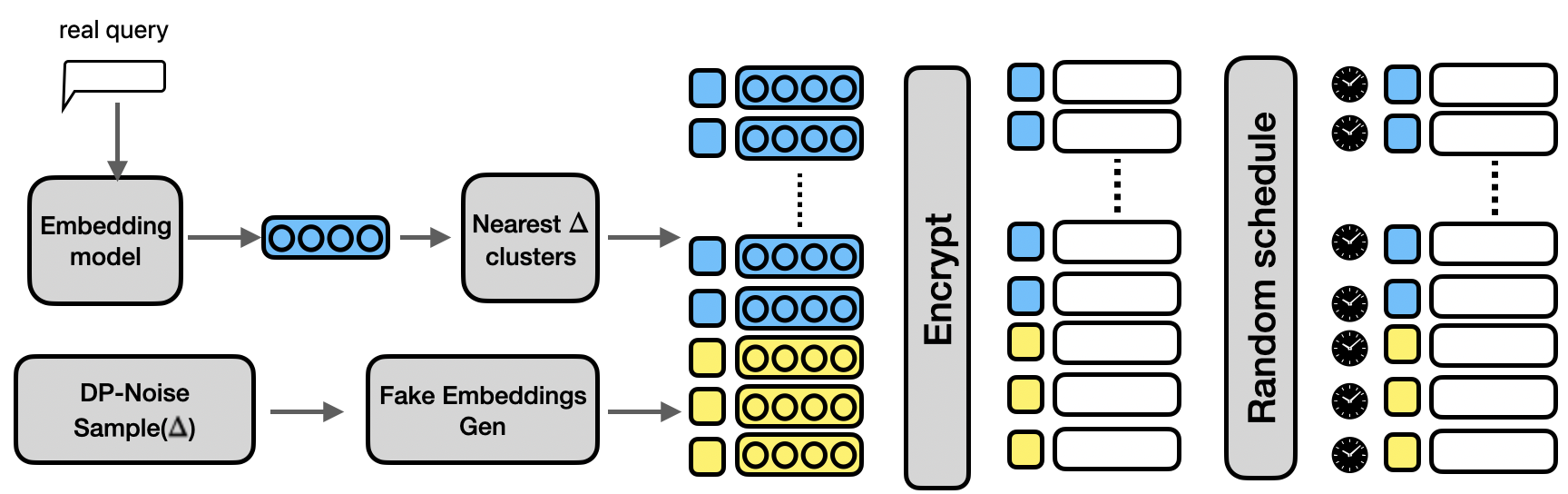}
    \caption{   Overview of query generation at client for a single epoch. The client first converts query into embedding. Then it finds nearest $\Delta$ clusters and creates $\Delta$ real queries. It then adds fake queries, encrypts them using SHE, and generates a random schedule for them. The client then executes each query independently at random instances within an epoch.}
    \label{fig:dp}
  \end{figure}
\section{Security}\label{sec:security}
In this section, we prove that if clients follow
\autoref{alg:clientquery} to generate queries, then the overall system achieves
$(2\epsilon, 2\Delta\delta)$-differential privacy in each epoch. Note from \autoref{alg:clientquery} that each query
consists of the nearest centroid and the embedding. We only focus on proving the
privacy of centroids because the IND-CPA security of SHE protects query embeddings.
Therefore in this section, we model each query as a nearest centroid without an embedding.
During each epoch, the server receives multiple queries for each cluster. Towards the end of the epoch,
the server's view can be considered as a noisy
histogram over all clusters. We consider this as an output of the DP mechanism.

We prove this argument in two steps:
First, we show that a noisy histogram output by a curator,
which gets users' queries as input and uses negative binomial $\mathcal{NB}$ mechanism to sample fake queries,
is $(2\epsilon, 2\Delta\delta)$-DP in the central model.
Second, we show the view of the server in an epoch can be simulated from the curator's output.

\textbf{Privacy due to malicious clients.} We assume all the clients participating in a batch will follow the protocol. However, a fraction $\alpha$ of malicious clients could negatively affect privacy by not submitting fake queries. In the distributed setting, each honest client samples fake queries from $\mathcal{NB}(r/U, p)$ per cluster, and the sum over $U$ honest clients yields $\mathcal{NB}(r, p)$. If $\alpha U$ clients are malicious and contribute no fakes, the total fake noise per cluster reduces to $\mathcal{NB}((1-\alpha)r, p)$, degrading the privacy guarantee. To compensate, honest clients can over-provision by sampling from $\mathcal{NB}(r/(U(1-\alpha)), p)$ instead of $\mathcal{NB}(r/U, p)$. The sum from $(1-\alpha)U$ honest clients then recovers $\mathcal{NB}(r, p)$, preserving the original $(2\epsilon, 2\Delta\delta)$-DP guarantee. The cost is a $1/(1-\alpha)$ factor increase in fake queries per honest client. For example, tolerating $\alpha = 0.1$ (10\% malicious clients) requires only 11\% more fake queries per honest client.

\subsection{DP Security in the Central Model}
We define the central model as follows:

\textbf{Central model.}
The curator performs the following:
\begin{enumerate}
 \item Collects the real queries from all the clients, $\{R_u\}_{u\in [U]}$, where
 $U$ is the total number of clients, and $R_u$ is each client's set of
 queries and will be at most $\Delta$. Let $R$ denote the list of all real queries.
 \item For each cluster $b$, samples the number of fake queries $F_j \gets \mathcal{NB}(r,p)$.
 Add it to the fake queries list $F$.
 \item Given the list of real and fake queries, randomly permute them:
 $\mathsf{list} \gets \mathsf{permute}(R, F)$.
 \item Send $\mathsf{list}$ to the server. \label{line:list}
\end{enumerate}
Another way to view it is that the curator sends a noisy histogram of queries to each cluster to the server.
Recall that each client can contribute at most $\Delta$ queries.

Differential privacy  in the central model is stated in the following theorem.
\begin{theorem}\label{thm:DP}
 For any query database of size $N$, number of clusters $K$, differential privacy parameters
 $\epsilon,\delta \in (0,1)$, and $\Delta \in \mathbb{N}$, let
 $p = e^{-0.2\epsilon/\Delta}$ and $r = 3(1 + \log(1/\delta))$. Then,
 the negative binomial mechanism $\mathcal{NB}(r,p)$ is $(2\epsilon, 2\Delta\delta)$-DP
 for $\Delta$-histogram.
\end{theorem}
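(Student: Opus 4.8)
The plan is to establish the central-model claim by reducing the $K$-dimensional noisy-histogram mechanism to a one-dimensional noisy-count mechanism and then bounding the privacy loss of the latter directly from the negative-binomial mass function. For the reduction, fix neighboring datasets; by definition of a $\Delta$-histogram the one user on which they differ contributes to at most $\Delta$ buckets (with multiplicity one), so the two histograms agree outside a set $B$ of at most $\Delta$ coordinates and differ by exactly one on $B$. The curator releases the shuffled query list, which is precisely the noisy histogram $(h_b + F_b)_b$ with the fake counts $F_b \sim \mathcal{NB}(r,p)$ drawn independently across $b$. The coordinates outside $B$ have identical law under the two datasets and are independent of the coordinates in $B$, so they are pure post-processing and may be dropped; what remains is a product of $|B| \le \Delta$ independent copies of the one-dimensional mechanism $M_1(c) := c + Z$, $Z \sim \mathcal{NB}(r,p)$, each applied to counts differing by one. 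Hence, by basic composition (Lemma~\ref{lemma:basic_comp}), it suffices to show $M_1$ is $(2\epsilon/\Delta,\,2\delta)$-DP against a $\pm 1$ input change, and composing over the $\le\Delta$ affected coordinates then gives exactly $(2\epsilon,\,2\Delta\delta)$-DP.

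For the one-dimensional analysis, which is the heart of the proof, take $c'=c+1$ without loss of generality. From the mass function the consecutive-mass ratio is
\[
\frac{\Pr[Z=k+1]}{\Pr[Z=k]} \;=\; \frac{(k+r)(1-p)}{k+1},
\]
which (since $r = 3(1+\log(1/\delta)) > 3 > 1$) is monotone in $k$, equal to $r(1-p)$ at $k=0$ and tending to $1-p$ as $k\to\infty$. Thus the privacy loss at an output $v = c+1+k$ in the common support equals $\log\!\bigl((k+r)(1-p)/(k+1)\bigr)$, while the one remaining output $v=c$ lies outside the support of $M_1(c+1)$ and carries loss $+\infty$ under $M_1(c)$ with probability exactly $\Pr[Z=0]=p^r$. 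Letting $S^{\star}$ be the set of outputs whose loss exceeds $2\epsilon/\Delta$ --- by monotonicity of the loss this is, in one direction, a short initial segment of outputs together with $v=c$, and in the other a tail --- one then bounds the excess $\Pr[M_1(c)\in S^{\star}] - e^{2\epsilon/\Delta}\Pr[M_1(c+1)\in S^{\star}]$ and its mirror image by $2\delta$. The two parameter choices do complementary work here: $p = e^{-0.2\epsilon/\Delta}$ keeps the loss below $2\epsilon/\Delta$ off $S^{\star}$ (the $0.2$ buying slack against both $e^{2\epsilon/\Delta}$ and the far-tail value $\log\frac{1}{1-p}$), and $r = 3(1+\log(1/\delta))$ forces $\Pr[Z=0]$ and the near-threshold mass below $2\delta$; these tail estimates are cleanest via the Poisson--Gamma mixture form of $\mathcal{NB}(r,p)$ (for integer $r$, a sum of $r$ i.i.d.\ geometrics), which controls $\Pr[Z=0]$ and $\Pr[Z\le t]$ sharply.

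The step I expect to be the main obstacle is exactly this simultaneous calibration: driving the privacy loss down forces $p$ toward $1$, which inflates both $\Pr[Z=0]=p^r$ and the lower-tail mass and hence requires a large $r$ --- yet a larger $r$ raises the near-boundary loss $\log(r(1-p))$, while the far-tail loss $\log\frac{1}{1-p}$ depends on $p$ alone. Threading these competing constraints for all $\epsilon,\delta\in(0,1)$ and $\Delta\in\mathbb{N}$ is where the explicit constants ($0.2$, the factor $3$, and the slack $2$ in $2\epsilon$ and $2\Delta\delta$) originate, and it is the delicate part; the monotonicity and tail facts about $\mathcal{NB}$ invoked along the way are routine. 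Finally, infinite divisibility of $\mathcal{NB}$ (Definition~\ref{def:inf_div}) is what later lets this central statement be realized by the per-client draws $\mathcal{NB}(r/U,p)$ in Algorithm~\ref{alg:query}, and a separate simulation argument transfers it from the curator's output to the server's actual view.
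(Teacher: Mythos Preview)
Your high-level plan is sound but departs from the paper in one significant way and carries one gap.

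The paper does not analyze the one-dimensional negative-binomial mechanism at all: it invokes Theorem~13 of Ghazi et al.\ (recorded in the proof as Lemma~\ref{thm:centralDP}) as a black box, which already asserts that with the stated $p,r$ the mechanism is $(\epsilon,\delta)$-DP for $\Delta$-summation. The paper's own work is purely the composition across buckets. Your proposal to bound the consecutive mass-function ratio and the tails of $Z$ directly is essentially a re-derivation of that cited lemma; it is a valid and more self-contained route, and the delicate constant-calibration you anticipate is exactly the content of the Ghazi et al.\ argument, but it is not what the paper does and is substantially more work than the paper's two-line citation.

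The gap is in your neighboring assumption. You assert that the two histograms ``differ by exactly one on $B$'' with $|B|\le\Delta$, but the paper's $\Delta$-histogram model lets a client send several of its $\Delta$ queries to the same bucket, and the paper's neighbor relation is \emph{replacement}: two clients, each contributing up to $\Delta$ queries, so up to $2\Delta$ buckets may be affected with per-bucket differences $\Delta_j\in\{1,\dots,\Delta\}$ summing to at most $2\Delta$. Your $\pm 1$-per-bucket composition does not cover $\Delta_j>1$ and, under replacement, undercounts the affected buckets. The paper handles this with a scaling trick you are missing: because $p=e^{-0.2\epsilon/\Delta}$ depends only on the ratio $\epsilon/\Delta$, the \emph{same} mechanism is $(\epsilon_j,\delta)$-DP for $\Delta_j$-summation with $\epsilon_j:=\epsilon\,\Delta_j/\Delta$ (since $\epsilon/\Delta=\epsilon_j/\Delta_j$), whence basic composition over the at most $2\Delta$ affected buckets gives $\sum_j\epsilon_j\le 2\epsilon$ and total failure $2\Delta\delta$---exactly the stated bound. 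Your remarks about infinite divisibility and the separate simulation to the server's view are correct and match the paper's Theorem~\ref{thm:sim}.
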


\begin{proof}
    We define the curator as Algorithm $M$ that takes input from a
    database $X$ of requests from $N$ clients and outputs a noisy histogram.
    We prove the theorem by investigating the affected buckets
    when we change the input $X$ by replacing one client with another.

    For a particular cluster $b$ in the histogram, we define
    $M_b(\cdot) := R_b + \mathcal{NB}(r,p)$, where $R_b$ is
    the total real queries for the cluster $b$ in the database.
    Note that this is the curator's exact output for each cluster.
    We break down the proof into two edge cases: 1) the simpler
    case where $X$ and $X'$ differ by a client which sends all
    its messages to a single cluster and 2) they differ by
    a client which sends $\Delta$ messages to $\Delta$ different
    clusters.

    \textit{Simple case.}
    Consider two neighboring databases
    $X$ and $X'$ which differ in queries contributed by a single client. Concretely,
    in $X$ the client contributes $\Delta$ queries to a cluster $b$
    and in $X'$ it contributes to another cluster $b'\not = b$.
    When $j \notin \{b,b'\}$ then $M_j(R)$ and $M_j(R')$ are identically distributed.
    Thus, the privacy provided by $M_j$ is 0-DP for all $j \notin \{b,b'\}$.

    Now consider the cluster $b$; the two databases differ by $\Delta$ real queries for this cluster.
    By the following, lemma $M_b$ provides $(\epsilon, \delta)$ privacy on cluster $b$.

    \begin{lemma}[Theorem 13~\cite{DBLP:conf/icml/Ghazi0MP20}]\label{thm:centralDP}
    For any $\epsilon,\delta \in (0,1)$ and $\Delta \in \mathbb{N}$, let
    $p = e^{-0.2\epsilon/\Delta}$ and $r = 3(1 + \log(1/\delta))$. Then,
    the $\mathcal{NB}(r,p)$-Mechanism is $(\epsilon, \delta)$-DP
    in the central model for $\Delta$-summation\footnote{$\Delta$-summation is the counting problem in which each client can send at most $\Delta$ queries.
    We extend it to histograms, $\Delta$-histograms, or the counting problem adopted by allowing each client to make at max $\Delta$ queries per epoch across all buckets.
    }.
    \end{lemma}

    We can make a similar argument for a cluster $b'$. Therefore, by basic composition,
    Lemma~\ref{lemma:basic_comp}, the total privacy provided by $M$ for these neighboring databases is $(2\epsilon, 2\delta)$-DP.

    \textit{General case.}
    Now, we consider a general case where two neighboring databases that differ in queries from a single client.
    In database $X$ the client contributes queries $\gamma_j$ for cluster $j$ and in database $X'$
    the client contributes queries $\gamma_j'$ for the cluster $j$.
    Define $ \Delta_j := |\gamma_j - \gamma_j'|$.
    The total difference between two neighboring
    databases is given by $\sum_{j}^{B}\Delta_j\leq 2 \Delta$.

    Let  $\epsilon_j := \frac{\epsilon \Delta_j}{\Delta} \leq \epsilon$.
    Further, we only consider the clusters where $\Delta_j > 0$ since the
    clusters where $\Delta_j = 0$ are already $0$-DP. (Equivalently, $\epsilon_j= 0$ for these $j$.)
    Then, Lemma~\ref{thm:centralDP} implies cluster $j$ is $(\epsilon_j, \delta)$-DP
    for $\Delta_j$-summation for the same mechanism since
    \[\frac{\epsilon}{\Delta} = \frac{\epsilon}{\Delta}\cdot\frac{\Delta_j}{\Delta_j} = \frac{\epsilon_j}{\Delta_j}.\]
    Simple composition over all the clusters yields
    \[\sum_j \epsilon_j = \frac{\epsilon}{\Delta} \sum_j \Delta_j \leq 2\epsilon.\]
    Basic composition also yields that the composed protocol has
    $\delta' := 2\Delta \delta$ since at most $2\Delta$ clusters
    differ.
\end{proof}

We emphasize that the theorem is stated for a particular epoch $S$. If clients participate in
many epochs, say $l$, then basic composition yields $(2l\epsilon, 2l\cdot\Delta\delta)$-DP.
Tighter accounting via advanced composition~\cite{DBLP:conf/stoc/DworkRV10} or R\'{e}nyi DP~\cite{DBLP:conf/csfw/Mironov17} would improve the $\epsilon$ dependence from $l$ to $O(\sqrt{l})$, which is a direct improvement for clients who participate in many epochs. We leave this analysis to future work.

\subsection{Generating View of the Server}
We show how the curator's output is sufficient to simulate the server's view generated by distributed DP
algorithm defined in \autoref{alg:clientquery} together with an anonymization network (AN).
We also require pseudorandom ciphertexts and evaluation keys,
a common assumption for RLWE-based SHE schemes. This is
also why we cannot reuse evaluation keys.
This statement is stated in the following theorem.

\begin{theorem}\label{thm:sim}
    Let $\mathcal{AN}$ be an anonymous network that 1) removes all identifying information
    from messages it receives and 2) randomly permutes all the messages it receives
    over a second-long slot. Let each honest client run the distributed mechanism in \autoref{alg:clientquery}
    with messages sent to the server via $\mathcal{AN}$. Then, the view of the server in an epoch can be simulated
    using the curator noisy histogram output described above in the central model.
\end{theorem}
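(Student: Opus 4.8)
The goal is to build a simulator $\mathsf{Sim}$ that takes the curator's output $\mathsf{list}$ from the central model and produces a transcript computationally indistinguishable from the server's real view, which — after the anonymous network $\mathcal{AN}$ — is a sequence of $T$ per-slot batches of query tuples $(\ct,\evk,\id)$. I would split the argument into three independent components: the multiset of cluster identifiers, the assignment of queries to time slots, and the cryptographic payloads $(\ct,\evk)$. Describing $\mathsf{Sim}$ first: the curator produces $\mathsf{list} = \mathsf{permute}(R,F)$, injecting $\mathcal{NB}(r,p)$ fake entries per cluster, whereas in Algorithm~\ref{alg:query} each of the $U$ honest clients independently injects $\mathcal{NB}(r/U,p)$ fake queries per cluster; since $\mathcal{NB}$ is infinitely divisible (Definition~\ref{def:inf_div}), $\sum_{u\in[U]}\mathcal{NB}(r/U,p)\sim\mathcal{NB}(r,p)$, so the multiset of $\id$'s carried by all honest queries in the real execution has exactly the law of $\mathsf{list}$. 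Given $\mathsf{list}$, $\mathsf{Sim}$ attaches to each $\id$ a uniformly random ciphertext $\ct\in R_Q^2$ and a uniformly random evaluation key $\evk$, then runs the scheduling of Algorithm~\ref{alg:schedule} verbatim — throwing each tuple into an independent uniform slot of $[T]$ — and finally randomly permutes the tuples inside each slot, outputting the resulting $T$ batches.

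\textbf{Indistinguishability.} Component one (cluster ids) is immediate from infinite divisibility as above. Component two (timing): in the real run every honest client assigns each of its real and fake queries to an independent uniform slot and $\mathcal{AN}$ re-permutes whatever lands in a given slot; hence, conditioned on the global multiset of query tuples, the map sending a query to its slot is a uniformly random function into $[T]$ and the intra-slot order is a uniform permutation, neither of which depends on which client produced which query — precisely the distribution $\mathsf{Sim}$ samples. So the slot/order structure conveys nothing beyond the global multiset. Component three (payloads): I would use a hybrid argument that, starting from the real execution, replaces the pair $(\ct,\evk)$ of one query at a time with a uniformly random pair. Each replacement is sound because Algorithm~\ref{alg:query} generates a \emph{fresh} secret key for every query (both the real-query keys $(\sk_l,\evk_l)$ and the fake-query keys $(\sk'_j,\evk'_j)$): by IND-CPA security of the RLWE-based SHE scheme a fresh encryption of the padded query embedding (or of $\vec 0$) is indistinguishable from random, and by the assumed pseudorandomness of RLWE evaluation keys under a fresh key, $\evk$ is indistinguishable from random; any correlation between a real query's plaintext and its cluster id is irrelevant since IND-CPA makes $\ct$ computationally independent of that correlation. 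After $|\vec Q|$ such steps the distribution equals $\mathsf{Sim}$'s, with total advantage at most $|\vec Q|$ times the SHE distinguishing advantage. Composing the three components gives that $\mathsf{Sim}(\mathsf{list})$ is computationally indistinguishable from the server's real view; anything the server subsequently computes (responses, using its own public database and independent randomness) is a function of this view, so it is simulated as well.

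\textbf{Main obstacle.} I expect the delicate step to be component two: one must argue carefully that the timing channel truly collapses onto the global multiset, i.e.\ that the union of the honest clients' per-client i.i.d.-uniform slot assignments is itself an i.i.d.-uniform assignment whose law is insensitive to how the queries are partitioned among clients, and that $\mathcal{AN}$'s per-slot shuffle destroys all residual intra-slot ordering — only with both facts is the server's $T$-tuple of batches reproducible from $\mathsf{list}$ alone. A secondary point needing care is that the hybrid in component three must rely on a fresh, independent key \emph{per query}: reusing a key across a client's queries would let the server link them across slots and break the reduction, which is exactly why \name\ forbids evaluation-key reuse. Finally, malicious clients need no simulation — their messages are produced by the adversary itself and their fake queries are excluded from the accounting — so the theorem as stated concerns only the honest clients' contributions, which is what $\mathsf{list}$ aggregates.
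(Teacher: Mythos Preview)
Your proposal is correct and follows essentially the same approach as the paper: both hinge on the infinite divisibility of the negative binomial to collapse the per-client $\mathcal{NB}(r/U,p)$ fake-query sampling into the curator's centralized $\mathcal{NB}(r,p)$, and on the independence and uniformity of per-query slot assignment (plus the $\mathcal{AN}$ intra-slot shuffle) to argue the schedule leaks nothing beyond the global multiset of cluster ids. The only organizational difference is that the paper strips the ciphertext payloads away at the outset (its Hybrid~0, justified by the pre-theorem remark on IND-CPA and pseudorandom evaluation keys) and then runs its hybrids on cluster ids alone, whereas you keep the payloads throughout and discharge them via an explicit per-query hybrid replacing each fresh $(\ct,\evk)$ by a uniform pair --- same ingredients, different ordering.
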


\begin{proof}
We use infinite divisibility of $\mathcal{NB}$ distribution
to produce a standard cryptography hybrid argument.

{\bf Hybrid 0:} Every client is the same as the real client given in Algorithm~\ref{alg:clientquery} except it does not include
 encrypted embeddings with each request (real or fake).
 Hybrid 0 is equivalent to the real client finding the nearest centroid for each real query and
 a fake random centroid for each fake query. The client then picks a random slot for each query and send
 associated queries at each slot.
\\

{\bf Hybrid 1:} In this hybrid, we replace all the clients with a single simulator. The input of a simulator is a list
$R = \{R_u\}_u$, where $R_u$ are the real queries for $u$-th client. The simulator then simulates each client $u$
with input $R_u$ as in Hybrid 0.

As the simulator internally runs each client $u$ as in
Hybrid 0 with input $R_u$, from the adversary's point of view,
the distribution of requests is the same as Hybrid 0.
\\

{\bf Hybrid 2:} In this Hybrid, the simulator's input is the same as in the previous Hybrid 1.
The simulator is defined as follows.
\begin{enumerate}
    \item  Collects the real queries from all the clients, $\{R_u\}_{u\in [U]}$.
    Call it a list of real queries $R$.
    \item For each cluster $b$, samples the number of fake queries $F_j \gets \mathcal{NB}(r, p)$.
    Add it to the fake queries list $F$. \label{line:large_nb}
    \item Given the list of real queries $R$ and fake queries $F$, randomly permutes them:
        $\mathsf{list} \gets \mathsf{permute}(R, F)$ \label{line:outcur}.
    \item Initialize schedule list $S$ of size $z$ (where z is the total number of slots in an epoch).
    For each query $q \in \mathsf{list}$, pick a random slot $p \gets z$ and set $S[p]= S[p]\cup q$.  \label{line:schedule}
    \item For each slot $p \in S$ send queries in $S[p]$ to the server.
\end{enumerate}
There are two differences between Hybrid 2 and Hybrid 1:
\begin{enumerate}
    \item In Hybrid 1, each simulated client samples fake queries for each cluster from a distribution $\mathcal{NB}(r/U,p)$. While
    in Hybrid 2 (Line~\ref{line:large_nb}) the simulator samples fake queries per cluster from $\mathcal{NB}(r,p)$.
    \item In Hybrid 1, each simulated client generates a schedule for its queries independently, while in Hybrid 2, the simulator
    generates a schedule for queries.
\end{enumerate}
To show that Hybrid 2 and 1 are indistinguishable, we use the definition of infinitely divisible
distribution, Definition~\ref{def:inf_div}.

Note that negative binomial distribution is infinite divisible because $\mathcal{NB}(r,p) = \sum_{i=1}^{U} \mathcal{NB}(r/U,p)$.
Thus, the distribution of per cluster fake queries per cluster in Hybrid 2  (Line~\ref{line:large_nb}) is the same as in Hybrid 1.
Observe that a random slot is picked for each
query independently in both Hybrids.
Therefore, the query schedule generated in both Hybrids is indistinguishable.

To complete the proof, observe that in Line~\ref{line:outcur} of Hybrid 2, the simulator generates $\mathsf{list}$ in a manner similar to the curator in the central model. The key difference is the scheduling: the simulator assigns each query to a uniformly random slot (Line~\ref{line:schedule}), while the curator sends $\mathsf{list}$ all at once.

We argue that the server's per-slot view can be simulated from the global histogram. Let $H = (H_1, \ldots, H_K)$ denote the global histogram of cluster counts in $\mathsf{list}$. In Line~\ref{line:schedule}, each query is assigned to one of $z$ slots independently and uniformly at random. Therefore, for each cluster $b$, the $H_b$ queries to cluster $b$ are distributed across slots according to a multinomial distribution $\mathsf{Multinomial}(H_b; 1/z, \ldots, 1/z)$. This distribution depends only on $H_b$ and $z$, both of which are determined by the global histogram and public parameters. Within each slot, the AN randomly permutes all queries. Since the per-slot assignment is a public-randomness function of the global histogram, and the within-slot ordering is uniformly random, the server's complete per-slot view (counts and ordering) can be simulated given only the global histogram $H$ and public randomness. No additional per-client signal remains.

\end{proof}

\section{Implementation}

We discuss implementation details for \name's components.

\subsection{Open Source SHE Library}
\label{sec:open-source-she-library}

Components of \name\ outlined in \autoref{sec:SHE-instance} are instantiated using the BFV encryption scheme~\cite{DBLP:journals/iacr/FanV12}.
We have released the SHE components of \name\ as \swifthe~\cite{\swifthecitation}, an open-source library
available at \href{\swifthelink}{\swifthelink}. \swifthe\ implements the BFV scheme, a secure dot-product for similarity compute, and keyword PIR for metadata fetch. The library API is simple and flexible, and can be used for other applications.  We also provide an easy-to-follow tutorial and examples to use the API.

We discuss key optimizations and implementation decisions:
\begin{itemize}
    \item \textbf{Residue Number System (RNS) variant of BFV}. The vanilla BFV scheme requires manipulating elements modulo the ciphertext modulus $Q$, which is an integer, possibly with hundreds of bits. Implementing this scheme requires slow multi-precision modulo arithmetic. To address this, we instead implemented the RNS variant of BFV~\cite{bajard2016full}.
    We also implemented the hybrid RNS variant of key-switching~\cite{kim2021revisiting} and fast base conversion~\cite{DBLP:conf/sacrypt/BajardEHZ16} methods required for ciphertext-ciphertext multiplication and rotation operations.

    \item \textbf{Reducing the computation of BFV operations}. The number theoretic transformation (NTT) speeds up polynomial multiplications required in most BFV operations. We implemented the NTT using Harvey’s butterfly~\cite{harvey2014faster} and optimized it using lazy modular reduction. This optimization improves forward and inverse NTT runtime by $20\text{-}25$\%.
    For the many sequential ciphertext-plaintext multiplications required in similarity computation and MulPIR, we apply the same lazy reduction strategy to avoid expensive modular operations.
    In the second dimension of MulPIR, we perform many ciphertext-ciphertext multiplications. We compute the tensor product and addition in a large basis, then scale back to modulus $Q$. When performing BFV multiplications in sequence, we skip the scaling step after each multiplication and perform it only once at the end~\cite{DBLP:conf/asiacrypt/KimPZ21}. This results in $10\text{-}20$\% improvement in MulPIR~\cite{DBLP:conf/uss/AliLP0SSY21} runtime.

     \item \textbf{Reducing the response size}. To reduce the response ciphertext size, the server uses the modulus switching technique
     to mod switch the ciphertext down to a single RNS limb. Furthermore,
     as the plaintext is encoded in higher order bits, the server only serializes the bits needed for decryption. Following the calculation in~\cite{DBLP:conf/uss/HuangLHD22}, we set $z=8$ and drop $l_0= \log_2{(q_l/t)}-2$ from the first ciphertext polynomial and $l_1\leq \lfloor\log_2(q_l/t)-2-\log_2(z \sqrt{2n/9})\rfloor$ from the second polynomial. This parameter selection gives a decryption error of $2^{-37.5}$.


    \item \textbf{Reducing the request size}.
    As discussed in \autoref{sec:SHE-instance}, evaluation keys dominate the request size. To minimize key size in similarity compute, the client transmits only two keys: one for the giant step and another for the baby step (the server repeatedly uses the same keys for all rotations).

\end{itemize}

\subsection{Embedding and Clustering}

We map each document in the server database into dense embeddings that retain semantic information. For this, we use the sentence embedding model \emph{msmarco-distilbert-base-tas-b} with 66 million parameters, optimized for semantic search~\cite{Hofstaetter2021_tasb_dense_retrieval}. In particular, we use the implementation provided by the `sentence-transformers' Python library\cite{reimers-2019-sentence-bert}. The model outputs dense 768-dimensional floating point vectors. We then use PCA to reduce the dimensionality of each vector to 192.

The next step is clustering these vectors. To do so, we first find $K$ cluster centroids by running k-means clustering. Then we assign each embedding to the cluster with the closest centroid. Finally, all points and their assigned cluster are compiled into an index.

We implemented this pipeline using the open-source library FAISS~\cite{douze2025faisslibrary}. FAISS provides fast algorithms for k-means clustering and indexing. FAISS exposes a flexible clustering API that allows us to experiment with various clustering parameters, such as the number of data points sampled, the number of iterations of the basic k-means algorithm. For indexing, we utilize one of the Flat indexes provided by FAISS to perform fast intra-cluster nearest neighbor search.

\subsection{Parameters Selection}
\textbf{SHE parameters.}
Performance of \name\ directly depends on our choice of BFV parameters.
We evaluated multiple parameter sets and selected the one that yielded optimal performance.
For search and metadata fetch, we use: polynomial degree $n=4096$, ciphertext modulus of $\log_2{Q}= 83$ bits, with CRT limbs of $27, 28$~ and~$28$ bits. We sample error from a centered binomial distribution with standard deviation $3.2$ and secret key from a ternary distribution. This achieves 128 bits of post-quantum security~\cite{DBLP:conf/uss/AlkimDPS16}.

For search, we set plaintext modulus of $\log_2{t}= 16$ bits.
Recall that the embedding model outputs floating-point vectors of dimension $d$; however BFV operates over integer vectors. We convert each float into a fixed-point integer by multiplying it by the appropriate scaling factor $p$ and then rounding to the nearest integer. We select $p$ to ensure the dot products do not wrap around modulo $t$. We say that precision is equal to $\log_2(p)$ bits. To avoid wrap-around, it is important that $p < \sqrt{(t-1)/2} - \sqrt{d}/2 $, which implies that $\log_2(p) < 1/2 \log_2 t  \approx 7$. Therefore, we obtain at least 7 bits of precision. To increase the precision, we use the plaintext CRT decomposition.
For this, we pick two plaintext CRT limbs of plaintext modulus of $\log_2{t_0}= 16 \text{ and } \log_2{t_1}= 17$ bits. This yields a total of 31 plaintext bits, i.e., 15 bits of precision, without increasing evaluation key size, since the BFV evaluation key doesn't depend on the plaintext modulus $t$.

As mentioned in \autoref{sec:open-source-she-library}, we reduce the size of the response ciphertext by first mod switching down to the last CRT limb of 27 bits and then further dropping the least significant bits. To avoid decryption error due to dropping bits, we set $l_0 = 9$ and $l_1=0$. This drops the size of the single response ciphertext from 55 KB to 23.5 KB.
For MulPIR to handle larger noise growth, we use a plaintext modulus of $\log_2{t}= 5$ bits. Having a smaller plaintext modulus allows us to set $l_0 = 19$ and $l_1=13 $, reducing each response ciphertext to 12 KB.

\textbf{Number of users and epochs.}\label{ssec:epoch}
Our protocol batches queries from $U$ users. A larger $U$ reduces the number of fake queries required per user, which in turn improves the performance. We set $U$ to 100K and 500K in our experiments. Another parameter is epoch length. Recall that each participating client sends queries at random time slots within an epoch, so some clients may wait until the end of the epoch. Therefore, a long epoch length will negatively affect client experience. We assume the server provisions 10,000 cores, consistent with large-scale search deployments. The epoch length is determined by the total number of queries (real and fake) divided by the server's aggregate throughput. For our parameter settings, this ranges from tens of seconds to a few minutes.

\textbf{Differential privacy parameters.}
For a fixed failure probability $\delta$, a smaller $\epsilon$ means stronger privacy but worse performance since each client must now execute more fake queries. We set the mechanism parameters to $\epsilon_0=0.05$ and $\delta_0=2^{-30}$. By \autoref{thm:DP}, the per-epoch guarantee is $(2\epsilon_0, 2\Delta\delta_0) = (0.1,\; 2\Delta\cdot 2^{-30})$-DP. For $\Delta \leq 5$ this gives $\delta \leq 2^{-26}$; for $\Delta \leq 10$, $\delta \leq 2^{-25}$.

For each cluster, a client samples fake queries from $\mathcal{NB}(r/U,p)$. This means in expectation the number of fake queries is

\begin{equation*}
\mathbb{E}[\text{Fake queries per client}] = \tfrac{rpK}{(1-p)U} = \tfrac{3(1+\log_2(1/2^{-30}))e^{-0.0005/\Delta}K}{(1-e^{-0.0005/\Delta})U}
\end{equation*}

Recall $K$ is the total number of clusters, $U$
is the number of participating honest clients, and $\Delta$ is the number of clusters each client probes. A straightforward calculation shows that the expected number of fake queries per client is at most $0.006 \Delta K$ for $U = 500{,}000$, a modest estimate for large-scale applications~\cite{semrush2025}. For example, with $K = 256$ clusters and $\Delta = 1$, each client generates fewer than $2$ fake queries on average. \autoref{tab:bench_wally} reports exact values for varying configurations.


 \textbf{Search parameters and metric.}

The product of clusters $K$ and the number of clusters probed $\Delta$ determine the performance and accuracy of \name. A smaller $K$ and a smaller $\Delta$ both enhance performance by reducing the overhead of fake queries. Further notice that very small $K$ will result in large response size.  Similarly, reducing $\Delta$ may lead to a substantial drop in accuracy due to the increased probability of missing the true nearest neighbor.

Our objective is to minimize the fake queries overhead, which is proportional to the product $K\Delta$, while balancing search quality and response size. We experimented with different parameters and selected those that provide the best accuracy while keeping $K$ moderately small and minimizing $\Delta$ as much as possible. For each probe, the client generates a new query with a large evaluation key. Consequently, a large $\Delta$ will result in large total request size.

\textbf{Dataset.} We use the MS MARCO document ranking dataset ~\cite{bajaj2016ms}. The dataset contains approximately 3.2 million passages, where each passage comprises a document identifier, document URL, document title, and document body. The embedding is generated from the document title and body.

\begin{table*}[!htbp]
\small
    \caption{Comparison of \name\, Tiptoe and Pacmann for MS MARCO~\cite{bajaj2016ms} database. Communication and QPS for \name\ includes overhead due to fake queries. Worst value in each metric is marked in red. For \name\ calculation we assume $U=500,000$. Epoch size ranges from 20 seconds to two minutes for varying values of $\Delta$.}
       \centering
    \makebox[\textwidth][c]{
\resizebox{0.8\textwidth}{!}{
    \begin{tabular}{l  c  c  c c }
        \hline
     & Client Storage (MB) & Communication (MB) & Queries Per Second (QPS) & MRR@100 \\
    \hline
    Tiptoe~\cite{DBLP:conf/sosp/HenzingerDCZ23}  &  0.61 & 17.4 & \cellcolor{red} 909 & \cellcolor{red} 0.11 \\
    Pacmann~\cite{DBLP:journals/iacr/ZhouSF24} & \cellcolor{red} 614 & \cellcolor{red} 61.6 & 34,482 & 0.26  \\
    \name\ ($K=256$, $\Delta = 1$) & 0.04 & 0.56 & 25,974 & 0.12 \\
    \name\ ($K=256$, $\Delta = 3$)& 0.04 & 1.7 & 9,881 & 0.16 \\
    \name\ ($K=256$, $\Delta = 5$)& 0.04 & 2.6 & 6,667 &  0.18\\
    \hline
    \end{tabular}
}}
    \label{tab:main_exp}
\end{table*}

\section{Evaluation}\label{sec:performance}
We demonstrate \name's concrete performance and compare it with baselines.
We ran server-side computations on a 6-core Intel Xeon w3-2423 with 32GB of RAM.

We measure performance via the following metrics. All metrics account for the overhead of fake queries: we amortize the expected number of fake queries over each real query. For example, if a client makes 1 real query and 2 fake queries, all three contribute to the reported communication and server computation per query.
\begin{itemize}
    \item Queries per second (QPS): We measure latency for a single query on a single core and compute QPS as 10,000 divided by that latency, assuming a server with 10,000 cores. This extrapolation is valid because each query is processed independently with no inter-query synchronization. Since QPS scales linearly with core count for all schemes, relative comparisons are independent of this choice. Any shared-resource effects such as memory bandwidth contention apply equally across schemes.
    \item Bandwidth: We calculate the size of the client query and the server response for each query.
    \item Accuracy: To evaluate the accuracy of our search results, we used the Mean Reciprocal Rank (MRR@100) metric, a standard measure for assessing the quality of search results. Given a query $q$ and a ground truth index $i$ such that $DB[i]$ is the most relevant entry, we merge all scores (along with entry indices) obtained from probing $\Delta$ clusters, sort them, and retain the first 100 scores. If the list contains $i$ at the $j$-th rank (where $j \leq 100$), the reciprocal rank score is calculated as $1/j$. In the absence of $i$ in the list, MRR@100 is set to 0.
\end{itemize}

\textbf{Baselines.} We compared \name\ with the following baselines. Both baselines offer full obliviousness but have high performance overhead:

{\bf Tiptoe (simulated)}~\cite{DBLP:conf/sosp/HenzingerDCZ23}, which is a state-of-the-art private search system based on linear homomorphic encryption from the learning with errors problem~\cite{DBLP:journals/jacm/Regev09}. Similar to \name, Tiptoe uses a clustering approach to find the nearest candidate. However for each query it only probes a single cluster. Running the complete Tiptoe system requires substantial infrastructure. We simulate Tiptoe using favorable estimates that give it an advantage in our comparison.
    \begin{itemize}
        \item Latency: We estimate the minimum time the server will take to process a single query. We exclude network cost and the per-query offline phase, both of which further favor Tiptoe.  For each query, the computation is dominated by dot-products between vectors.
        To privately fetch and rank candidates from a particular cluster, the server performs $N$ dot-products between vectors of dimension $d$ (ranking). We assume that each component of a vector is 32 bits.
        \item Accuracy: We divide the database into $\sqrt{N}$ clusters, probe the nearest cluster and calculate MRR@100 of these candidates.
        \item Communication: We calculate the minimum number of 32-bit elements the client and server have to transfer per query. To generate a token, the client sends at least $n=2048$ RLWE ciphertexts or $n^2$ elements. For ranking, the client sends a query consisting of $d\sqrt{N}$ elements. For tokens, the server response consists of $16$ RLWE ciphertexts (each at least 2048 elements) and for ranking the response consists of $\sqrt{N}$ elements.

    \end{itemize}

    \begin{table*}[!htbp]
    \caption{Performance of \name\ across varying database sizes. Each client will probe a single cluster $\Delta=1$. Each setting provides $(\epsilon=0.1, \delta = 2^{-26})$-DP guarantee.}
    \small
    \centering
    \begin{tabular}{l  c  c  c c c c }
        \hline
     \textbf{Entries in database N} &  \multicolumn{2}{c}{\textbf{1 Million}}   & \multicolumn{2}{c}{\textbf{16 Million}} & \multicolumn{2}{c}{\textbf{100 Million}}  \\
    \hline
    Number of users $U$  & 100,000 & 500,000 & 100,000 & 500,000& 100,000 & 500,000\\
    Number of clusters ($K$)  & 128 & 256 & 256 & 512 &  512 & 1,024 \\
    Expected fake queries     &  1.7 & 0.3 & 3.3 & 1.7 &  6.7 & 2.6 \\
    Request size (MB)  &  0.76 & 0.36 & 1.18 & 0.76 & 2.17 & 1.01 \\
    Response size (MB) &  0.17 & 0.04 & 2.3  & 1.5 & 13.9 & 3.03  \\
    Queries per second (in thousand) &  22 & 66 & 1.4 & 3.2 & 0.47 & 1.39 \\
    Epoch size   &  10 sec  & 10 sec  &  2 min  & 3 min  &  4 min  & 6 min\\
    \hline
    \end{tabular}
    \label{tab:bench_wally}
\end{table*}

\begin{table}[!htbp]
\small
    \caption{Performance of \name\ for different values of $K$ with $U=100{,}000$, $\Delta=3$, and one million database entries. Epoch sizes range from 10 to 20 seconds. Each setting provides $(\epsilon=0.1, \delta = 2^{-26})$-DP guarantee.}
    \centering
    \begin{tabular}{l   c c c c  }
        \hline
    Number of clusters ($K$) & 64 & 128 & 256 & 512 \\
        \hline
    Request size (MB)  & 0.45 & 0.45 & 0.9 & 1.35 \\
    Response size (MB) & 0.10 & 0.10 & 0.2 & 0.3  \\
     QPS (in thousand) & 31 & 31 & 22 & 15 \\
    \hline
    \end{tabular}
    \label{tab:bench_wally_K}
\end{table}

\begin{table}[!htbp]
\small
    \caption{Performance of \name\ for different values of $\Delta$ with $U=100{,}000$, $K=128$, and one million database entries. Epoch sizes range from a few seconds to about a minute. Each setting provides $(\epsilon=0.1, \delta \leq 2^{-25})$-DP guarantee.}
    \centering
    \begin{tabular}{l   c c c c  }
        \hline
    Clusters probed ($\Delta$)  & 1 & 3 & 5 & 10 \\
        \hline
    Request size (MB)  &  0.36 & 0.45 & 0.59 & 0.65 \\
    Response size (MB) & 0.08 & 0.10 & 0.11 & 0.14  \\
     QPS (in thousand) & 37 & 31 & 27 & 21 \\
    \hline
    \end{tabular}
    \label{tab:bench_wally_delta}
\end{table}

\textbf{Pacmann}~\cite{DBLP:journals/iacr/ZhouSF24}, which is concurrent work. Like \name, this scheme employs sublinear server computation, but it offers full obliviousness. Pacmann's approach differs from \name\ in two key aspects. First, to locate the nearest neighbor, Pacmann utilizes a graph-based search algorithm. Second, Pacmann employs sublinear Offline/Online PIR to retrieve graph nodes and traverse the graph. For each query, the scheme requires multiple round trips and per-query maintenance. We compare our results with the ones provided in the Pacmann paper~\cite{DBLP:journals/iacr/ZhouSF24}, which is sufficient since we conducted our own experiments in a comparable environment.

\textbf{Metadata Fetch.} To fetch metadata, Tiptoe relies on an offline/online PIR scheme called SimplePIR (referred to as the URL step in their paper). We run the SimplePIR open-source implementation to estimate performance for this step. Pacmann has not implemented metadata fetch step. The comparison for metadata fetch is given in \autoref{sec:metadatafetch}.

\textbf{Comparison with Tiptoe.} The results are presented in \autoref{tab:main_exp}. \name\ significantly outperforms Tiptoe across all metrics. Notably, \name’s QPS is $7\text{-}29\times$ higher. This is because in Tiptoe, the server must perform at least one expensive operation per entry in the database. Similarly, the communication per query is $6.7\text{-}31\times$ smaller than Tiptoe's, for varying numbers of clusters probed. Tiptoe can only probe a single cluster per query, and to probe more clusters, a new query must be initiated. For $\Delta=1$, a single probe case, \name’s MRR@100 score is only slightly better than Tiptoe’s score of 0.11. However, as we increase to $\Delta=5$, \name’s MRR@100 score improves to 0.18, which is considerably better than Tiptoe’s score.

\textbf{Comparison with Pacmann.} \autoref{tab:main_exp} shows that Pacmann has higher QPS and MRR@100 than \name. Specifically, the higher QPS is because Pacmann is based on offline/online sublinear PIR and their higher MRR@100 is due to the use of a graph-based search data structure, which is superior to the clustering technique. However, this scheme requires large client storage and communication per query. Specifically, for a database with 3.2 million entries, the client must store a hint of 614 MB and per-query communication is 61.6 MB, which is prohibitively high, especially for cellular networks. Another downside is that hint-based protocols cannot directly handle database updates. Finally, to gain better MRR@100 the scheme requires multiple round trips, which add to the network delay. We did not consider network delay in Pacmann's QPS calculation. Specifically, due to its use of offline/online PIR, QPS is $1.3\text{-}5.6\times$ higher than \name\ but communication is $23\text{-}123\times$ higher, depending on $\Delta$.

\textbf{Benchmarking \name.}
In \autoref{tab:bench_wally}, we evaluate the performance of \name\ across various database sizes ranging from one million entries to one hundred million entries and cluster counts ranging from 128 to 1,024. For all these experiments, we assume that the client probes the nearest cluster $\Delta=1$. We consider $U = 100{,}000$ and $U = 500{,}000$ honest users, with epoch sizes ranging from 10 seconds to 6 minutes.

First, we notice that increasing the number of honest users $U$ leads to improved overall performance across all database sizes. Specifically, the QPS increases by $2\text{-}3\times$, while the request and response sizes improve by $1.5\text{-}2.4\times$. This demonstrates that \name\ is an ideal choice for applications with moderate to extended delays.

Second, we observe that the expected number of fake queries is nearly proportional to the product of the number of honest users ($U$) and inverse of the number of clusters ($K$). For each database, as $U$ increases from 100,000 to 500,000, we increase the number of users by $5\times$ (which reduces fake queries) but also increase the number of clusters by $2\times$ (which increases fake queries). Consequently, we see that fake queries decrease by a factor of almost $5/2 = 2.5$, indicating a linear dependency. Alternatively, we can consider that reducing the number of clusters $K$ for a fixed database will lead to a decrease in fake queries.

Third, we note that as the database size varies, the request and response sizes linearly increase. However, the QPS decreases. This is because for larger databases, each cluster size becomes substantial, resulting in increased server operations per query.
\autoref{tab:bench_wally_K} shows the effect of varying $K$ with $\Delta=3$ for a one-million-entry database. For small $K$ (64 and 128), the overhead of fake queries is negligible and performance remains constant. As $K$ grows, communication and QPS degrade gradually. For instance, quadrupling $K$ from 128 to 512 increases request size by $3\times$ and reduces QPS by $2\times$. Even at $K=512$, the system still achieves 15,000 QPS with 1.35 MB request size, demonstrating that \name\ scales gracefully with the number of clusters.

\autoref{tab:bench_wally_delta} shows the effect of varying $\Delta$ with $K=128$. Increasing $\Delta$ from 1 to 10 increases request size by only $1.8\times$ and decreases QPS by $1.8\times$, a sub-linear cost relative to the $10\times$ increase in probes. Meanwhile, accuracy improves significantly. From \autoref{tab:main_exp}, MRR@100 increases from 0.12 ($\Delta=1$) to 0.18 ($\Delta=5$), closing the gap with Pacmann's MRR@100 of 0.26, which requires 614 MB of client storage and 61.6 MB of communication per query. This demonstrates a practical throughput-accuracy tradeoff that system designers can tune based on application requirements.


\begin{table}[h]
    \caption{Comparison of \name\ and SimplePIR for fetching metadata of size 1KB from a database with 3.2 million entries divided into $K=256$ clusters. Communication and QPS for \name\ include overhead due to fake queries.}
    \centering
    \begin{tabular}{l    c  c  }
        \hline
     & Communication (MB) & Queries Per Second (QPS) \\
    \hline
    SimplePIR~\cite{DBLP:conf/sosp/HenzingerDCZ23}  &   40.1 &  85,327  \\
    \name\ ($\Delta = 1$) &  1.8 & 35,211 \\
    \hline
    \end{tabular}
    \label{tab:metadata_fetch}
\end{table}
\paragraph{Metadata retrieval performance}
\label{sec:metadatafetch}
Recall that when metadata size is large, clients participate in an additional epoch with $\Delta=1$. That is, the client will only access the cluster in which the most relevant entry falls. They will still have to make additional fake queries and follow a random schedule. Within each query cluster (real or fake), the server performs keyword PIR computation using SHE. In Tiptoe, the server has to perform SimplePIR computation over the entire database to fetch metadata. \autoref{tab:metadata_fetch} shows the comparison of \name\ and SimplePIR for metadata fetch for a database with 3.2 million entries, where each metadata is 1KB. Note that while SimplePIR has a 2.4$\times$ higher QPS, its communication overhead is 22$\times$ higher than \name. We emphasize that for SimplePIR, we excluded the offline hint download cost.
\section{Related Work}\label{sec:relatedwork}
\paragraph{Homomorphic encryption}
HE alone is a popular technique used in private search
systems~\cite{DBLP:conf/sosp/HenzingerDCZ23,DBLP:conf/sosp/AhmadSAAG21,DBLP:journals/tbbis/EngelsmaJB22,DBLP:conf/btas/Boddeti18}.
These systems achieve full obliviousness but require encrypted
computation over the entire database, resulting in high bandwidth and server cost that scales linearly with database size. In contrast, \name\ computes
on only a few clusters per query, achieving sublinear server cost with
DP to hide the requested cluster index from the server.
\autoref{sec:intro} provides an in-depth comparison with Tiptoe~\cite{DBLP:conf/sosp/HenzingerDCZ23}.
A related system is~\cite{DBLP:journals/pvldb/PangDX10} which sends embellished text queries in the clear but uses additively HE~\cite{benaloh1994dense} to compute an encrypted similarity score.

\paragraph{Two-server and MPC-based systems}
Two-server protocols such as Preco~\cite{DBLP:conf/sp/Servan-Schreiber22} achieve full cryptographic obliviousness via distributed point functions and probabilistic batch codes, but require 10~MB of communication and 10 seconds of latency for a database with 10M entries, with linearly-sized computation per server. SANNS~\cite{DBLP:conf/uss/0030CDPRR20} uses garbled circuits and HE but sends GBs of data per query for databases of 1M entries or more. Both systems require two non-colluding servers that jointly perform the computation. \name\ replaces this with a single computation server and a lightweight anonymization network, which is easier to deploy since widely trusted anonymization services already exist. The tradeoff is that \name\ provides differential privacy rather than exact obliviousness; in return, \name\ achieves $7\text{-}29\times$ higher QPS and orders-of-magnitude lower communication than these systems.

\paragraph{Concurrent works}
Pacmann~\cite{DBLP:journals/iacr/ZhouSF24} efficiently reduces graph-based private nearest neighbors search
to many PIR queries on a graph. They use offline-online PIR and each PIR computation is sublinear by way of a sublinear PIR~\cite{DBLP:conf/sp/ZhouPZS24}.
Pacmann is a cryptographic, fully-oblivious private search system and offers good accuracy
via graph-based nearest neighbors search, which results in higher MRR@100 than clustering-based approaches.
However, Pacmann's graph traversal requires many sequential round trips, each involving a PIR query, which adds significant network latency. Additionally, Pacmann requires 614~MB of client storage and 61.6~MB of communication per query, which is prohibitively high for mobile clients and cellular networks. Clients must also stream the entire database offline and the hints must be refreshed on database updates. In contrast, \name\ requires only 0.04~MB of client storage, $0.56\text{-}2.6$~MB of communication, and no offline phase.

Panther~\cite{DBLP:conf/ccs/LiHZHLWC25} uses a clustering-based approach similar to SANNS but replaces expensive DORAM with batch PIR based on SealPIR~\cite{DBLP:conf/sp/AngelCLS18}, and combines secret sharing, garbled circuits, and HE.
Panther is a cryptographic, fully-oblivious single-server private search system that also provides server database privacy, ensuring the client learns only the final top-$k$ results.
However, Panther is an interactive protocol requiring multiple rounds of secret sharing and garbled circuit evaluation, and its latency degrades significantly under WAN conditions (e.g., $9\text{-}18$ seconds for 1M--10M entries). Additionally, Panther requires 93--284~MB of communication per query for 1M--10M entries. In contrast, \name\ is non-interactive and achieves orders-of-magnitude lower communication ($0.56\text{-}2.6$~MB) with thousands of QPS.

\paragraph{Hint-based and recent single-server PIR}
Recent single-server PIR schemes achieve high throughput through preprocessing. SimplePIR~\cite{DBLP:conf/uss/HenzingerHCMV23} and Piano~\cite{DBLP:conf/sp/ZhouPZS24} require clients to download large offline hints (121~MB to 724~MB for 1--32~GB databases in SimplePIR) that must be refreshed when the database changes. YPIR~\cite{DBLP:conf/uss/MenonW24} eliminates the offline hint download but has large query upload sizes (846~KB to 2.5~MB for 1--32~GB databases) that grow with the database size. In \name, metadata PIR operates within a single cluster of size $N' \ll N$, so MulPIR's linear-in-$N'$ computation is already efficient without requiring any offline phase, client-side storage, or large uploads. For these reasons, \name\ uses MulPIR for metadata retrieval, though it is compatible with any single-server PIR scheme.

\paragraph{DP and the shuffle model}
\name's DP construction is closely related to the shuffle model of differential privacy~\cite{DBLP:conf/eurocrypt/CheuSUZZ19,DBLP:journals/corr/abs-2001-03618}, where clients send messages through a shuffler that permutes them before the server observes them. The encode-shuffle-analyze paradigm, exemplified by Prochlo~\cite{DBLP:conf/sosp/BittauEMMRLRKTS17}, follows this pattern for analytics tasks. \name\ instantiates a similar structure for private search: clients send real and fake queries through an anonymization network that acts as the shuffler. However, in \name\ clients locally add noise (fake queries) rather than relying on the shuffler to inject noise. A key difference from typical shuffle-DP systems is that a dedicated shuffler aggregates all messages before forwarding, adding latency equal to the entire collection period. \name\ instead relies on existing lightweight anonymization services that forward messages continuously within one-second batches, allowing the server to process queries as they arrive rather than waiting for the full epoch.

DP information-theoretic PIR
schemes~\cite{DBLP:journals/popets/ToledoDG16,DBLP:conf/uss/AlbabIVG22} achieve similar privacy goals but rely on
multiple non-colluding servers.
PIR in the shuffle model~\cite{DBLP:conf/citc/IshaiKL024,DBLP:conf/ccs/GasconIKL0024}
is closer to \name\ in that it requires one PIR server, replacing DP
with statistical indistinguishability.
However,~\cite{DBLP:conf/citc/IshaiKL024} is a theoretical work,
while~\cite{DBLP:conf/ccs/GasconIKL0024} relies on non-standard cryptographic assumptions
and requires many sub-queries per query, incurring a large response size.

Further related works are discussed in \autoref{sec:relatedwork_appendix}.
\subsection{Further Details}\label{sec:relatedwork_appendix}
\paragraph{Early private search systems}
Early private search systems heuristically add fake queries or alter queries client-side~\cite{DBLP:journals/oir/Domingo-FerrerSC09,DBLP:conf/cse/YeWPC09,DBLP:conf/sdm/MurugesanC09,DBLP:journals/tit/Rebollo-MonederoF10,DBLP:journals/pvldb/PangDX10,DBLP:journals/corr/abs-1109-4677,DBLP:conf/sp/BalsaTD12,DBLP:journals/ir/ArampatzisDE15,DBLP:conf/trustcom/PetitCMBK15}.
They are broken by a semi-honest server, i.e., the server can de-anonymize
queries~\cite{DBLP:journals/corr/abs-1211-0320,DBLP:journals/jcs/PeddintiS14,DBLP:conf/ccs/GervaisSSCL14,DBLP:journals/jisa/PetitCBMCBK16}
since queries are sent in the clear.
or they have strict distribution requirements on queries, e.g., i.i.d.~queries~\cite{DBLP:journals/tit/Rebollo-MonederoF10}.
In contrast, \name\ has a differential privacy guarantee: the server's view between
histograms over clusters with and without a client's queries is differentially private.
Also, client queries in \name\ are encrypted  preventing the
attack in~\cite{DBLP:journals/popets/ToledoDG16}.

\paragraph{Anonymization networks}
Anonymization networks and mix-nets~\cite{DBLP:conf/uss/DingledineMS04,DBLP:journals/cacm/Chaum81} anonymize
the client but send the query in the clear. Only applying ANs to search systems reveals clients' information to a semi-honest server~\cite{DBLP:conf/trustcom/PetitCMBK15}.

\paragraph{Trusted hardware}
Trusted hardware is another tool used in private search systems~\cite{DBLP:conf/middleware/MokhtarBFPPS17,DBLP:conf/icdcs/GoltzschePMBBFK18,DBLP:conf/osdi/LiZZ0XA021}.
However, these search systems are only as strong as their underlying hardware~\cite{DBLP:conf/uss/BulckMWGKPSWYS18,DBLP:journals/ieeesp/MurdockOGBPG20,DBLP:conf/sp/GrussLSGJOSY18,DBLP:conf/uss/ChenVMDOG21,DBLP:journals/ieeesp/ChenCXZLL20}.
\section{Conclusion}
\name\ is the first protocol to demonstrate batch private nearest-neighbor search at large scale. By combining somewhat homomorphic encryption with differential privacy, it protects client queries and releases only a differentially private histogram of aggregate traffic. \name\ achieves over 25K QPS with low communication overhead and accuracy on par with state-of-the-art systems.


\bibliographystyle{ACM-Reference-Format}
\bibliography{style/refs}


\begin{thebibliography}{87}


\ifx \showCODEN    \undefined \def \showCODEN     #1{\unskip}     \fi
\ifx \showISBNx    \undefined \def \showISBNx     #1{\unskip}     \fi
\ifx \showISBNxiii \undefined \def \showISBNxiii  #1{\unskip}     \fi
\ifx \showISSN     \undefined \def \showISSN      #1{\unskip}     \fi
\ifx \showLCCN     \undefined \def \showLCCN      #1{\unskip}     \fi
\ifx \shownote     \undefined \def \shownote      #1{#1}          \fi
\ifx \showarticletitle \undefined \def \showarticletitle #1{#1}   \fi
\ifx \showURL      \undefined \def \showURL       {\relax}        \fi
\providecommand\bibfield[2]{#2}
\providecommand\bibinfo[2]{#2}
\providecommand\natexlab[1]{#1}
\providecommand\showeprint[2][]{arXiv:#2}

\bibitem[Ahmad et~al\mbox{.}(2021)]%
        {DBLP:conf/sosp/AhmadSAAG21}
\bibfield{author}{\bibinfo{person}{Ishtiyaque Ahmad}, \bibinfo{person}{Laboni
  Sarker}, \bibinfo{person}{Divyakant Agrawal}, \bibinfo{person}{Amr~El
  Abbadi}, {and} \bibinfo{person}{Trinabh Gupta}.}
  \bibinfo{year}{2021}\natexlab{}.
\newblock \showarticletitle{Coeus: {A} System for Oblivious Document Ranking
  and Retrieval}. In \bibinfo{booktitle}{\emph{{SOSP}}}.
  \bibinfo{publisher}{{ACM}}, \bibinfo{pages}{672--690}.
\newblock


\bibitem[Ahuja et~al\mbox{.}(2019)]%
        {ahuja2019movie}
\bibfield{author}{\bibinfo{person}{Rishabh Ahuja}, \bibinfo{person}{Arun
  Solanki}, {and} \bibinfo{person}{Anand Nayyar}.}
  \bibinfo{year}{2019}\natexlab{}.
\newblock \showarticletitle{Movie recommender system using k-means clustering
  and k-nearest neighbor}. In \bibinfo{booktitle}{\emph{2019 9th International
  Conference on Cloud Computing, Data Science \& Engineering (Confluence)}}.
  IEEE, \bibinfo{pages}{263--268}.
\newblock


\bibitem[Al-Rfou et~al\mbox{.}(2012)]%
        {al2012trackmenot}
\bibfield{author}{\bibinfo{person}{Rami Al-Rfou}, \bibinfo{person}{William
  Jannen}, {and} \bibinfo{person}{Nikhil Patwardhan}.}
  \bibinfo{year}{2012}\natexlab{}.
\newblock \showarticletitle{Trackmenot-so-good-after-all}.
\newblock \bibinfo{journal}{\emph{arXiv preprint arXiv:1211.0320}}
  (\bibinfo{year}{2012}).
\newblock


\bibitem[Al{-}Rfou' et~al\mbox{.}(2012)]%
        {DBLP:journals/corr/abs-1211-0320}
\bibfield{author}{\bibinfo{person}{Rami Al{-}Rfou'}, \bibinfo{person}{William
  Jannen}, {and} \bibinfo{person}{Nikhil Patwardhan}.}
  \bibinfo{year}{2012}\natexlab{}.
\newblock \showarticletitle{TrackMeNot-so-good-after-all}.
\newblock \bibinfo{journal}{\emph{CoRR}}  \bibinfo{volume}{abs/1211.0320}
  (\bibinfo{year}{2012}).
\newblock


\bibitem[Albab et~al\mbox{.}(2022)]%
        {DBLP:conf/uss/AlbabIVG22}
\bibfield{author}{\bibinfo{person}{Kinan~Dak Albab}, \bibinfo{person}{Rawane
  Issa}, \bibinfo{person}{Mayank Varia}, {and} \bibinfo{person}{Kalman
  Graffi}.} \bibinfo{year}{2022}\natexlab{}.
\newblock \showarticletitle{Batched Differentially Private Information
  Retrieval}. In \bibinfo{booktitle}{\emph{{USENIX} Security Symposium}}.
  \bibinfo{publisher}{{USENIX} Association}, \bibinfo{pages}{3327--3344}.
\newblock


\bibitem[Ali et~al\mbox{.}(2021)]%
        {DBLP:conf/uss/AliLP0SSY21}
\bibfield{author}{\bibinfo{person}{Asra Ali}, \bibinfo{person}{Tancr{\`{e}}de
  Lepoint}, \bibinfo{person}{Sarvar Patel}, \bibinfo{person}{Mariana Raykova},
  \bibinfo{person}{Phillipp Schoppmann}, \bibinfo{person}{Karn Seth}, {and}
  \bibinfo{person}{Kevin Yeo}.} \bibinfo{year}{2021}\natexlab{}.
\newblock \showarticletitle{Communication-Computation Trade-offs in {PIR}}. In
  \bibinfo{booktitle}{\emph{{USENIX} Security Symposium}}.
  \bibinfo{publisher}{{USENIX} Association}, \bibinfo{pages}{1811--1828}.
\newblock


\bibitem[Alkim et~al\mbox{.}(2016)]%
        {DBLP:conf/uss/AlkimDPS16}
\bibfield{author}{\bibinfo{person}{Erdem Alkim}, \bibinfo{person}{L{\'{e}}o
  Ducas}, \bibinfo{person}{Thomas P{\"{o}}ppelmann}, {and}
  \bibinfo{person}{Peter Schwabe}.} \bibinfo{year}{2016}\natexlab{}.
\newblock \showarticletitle{Post-quantum Key Exchange - {A} New Hope}. In
  \bibinfo{booktitle}{\emph{{USENIX} Security Symposium}}.
  \bibinfo{publisher}{{USENIX} Association}, \bibinfo{pages}{327--343}.
\newblock


\bibitem[Angel et~al\mbox{.}(2018)]%
        {DBLP:conf/sp/AngelCLS18}
\bibfield{author}{\bibinfo{person}{Sebastian Angel}, \bibinfo{person}{Hao
  Chen}, \bibinfo{person}{Kim Laine}, {and} \bibinfo{person}{Srinath T.~V.
  Setty}.} \bibinfo{year}{2018}\natexlab{}.
\newblock \showarticletitle{{PIR} with Compressed Queries and Amortized Query
  Processing}. In \bibinfo{booktitle}{\emph{{IEEE} Symposium on Security and
  Privacy}}. \bibinfo{publisher}{{IEEE} Computer Society},
  \bibinfo{pages}{962--979}.
\newblock


\bibitem[{Apple Inc.} and {Swift Homomorphic Encryption project
  authors}({[n.\,d.]})]%
  {Apple_Inc_and_Swift_Homomorphic_Encryption_project_authors_Swift_Homomorphic_Encryption}
\bibfield{author}{\bibinfo{person}{{Apple Inc.}} {and} \bibinfo{person}{{Swift
  Homomorphic Encryption project authors}}.}
  \bibinfo{year}{[n.\,d.]}\natexlab{}.
\newblock \bibinfo{title}{{Swift Homomorphic Encryption}}.
\newblock
\urldef\tempurl%
\url{https://github.com/apple/swift-homomorphic-encryption}
\showURL{%
\tempurl}


\bibitem[Arampatzis et~al\mbox{.}(2015a)]%
        {arampatzis2015versatile}
\bibfield{author}{\bibinfo{person}{Avi Arampatzis}, \bibinfo{person}{George
  Drosatos}, {and} \bibinfo{person}{Pavlos~S Efraimidis}.}
  \bibinfo{year}{2015}\natexlab{a}.
\newblock \showarticletitle{Versatile query scrambling for private web search}.
\newblock \bibinfo{journal}{\emph{Information Retrieval Journal}}
  \bibinfo{volume}{18} (\bibinfo{year}{2015}), \bibinfo{pages}{331--358}.
\newblock


\bibitem[Arampatzis et~al\mbox{.}(2015b)]%
        {DBLP:journals/ir/ArampatzisDE15}
\bibfield{author}{\bibinfo{person}{Avi Arampatzis}, \bibinfo{person}{George
  Drosatos}, {and} \bibinfo{person}{Pavlos~S. Efraimidis}.}
  \bibinfo{year}{2015}\natexlab{b}.
\newblock \showarticletitle{Versatile Query Scrambling for Private Web Search}.
\newblock \bibinfo{journal}{\emph{Inf. Retr. J.}} \bibinfo{volume}{18},
  \bibinfo{number}{4} (\bibinfo{year}{2015}), \bibinfo{pages}{331--358}.
\newblock


\bibitem[Bajaj et~al\mbox{.}(2016)]%
        {bajaj2016ms}
\bibfield{author}{\bibinfo{person}{Payal Bajaj}, \bibinfo{person}{Daniel
  Campos}, \bibinfo{person}{Nick Craswell}, \bibinfo{person}{Li Deng},
  \bibinfo{person}{Jianfeng Gao}, \bibinfo{person}{Xiaodong Liu},
  \bibinfo{person}{Rangan Majumder}, \bibinfo{person}{Andrew McNamara},
  \bibinfo{person}{Bhaskar Mitra}, \bibinfo{person}{Tri Nguyen},
  {et~al\mbox{.}}} \bibinfo{year}{2016}\natexlab{}.
\newblock \showarticletitle{Ms marco: A human generated machine reading
  comprehension dataset}.
\newblock \bibinfo{journal}{\emph{arXiv preprint arXiv:1611.09268}}
  (\bibinfo{year}{2016}).
\newblock


\bibitem[Bajard et~al\mbox{.}(2016a)]%
        {DBLP:conf/sacrypt/BajardEHZ16}
\bibfield{author}{\bibinfo{person}{Jean{-}Claude Bajard},
  \bibinfo{person}{Julien Eynard}, \bibinfo{person}{M.~Anwar Hasan}, {and}
  \bibinfo{person}{Vincent Zucca}.} \bibinfo{year}{2016}\natexlab{a}.
\newblock \showarticletitle{A Full {RNS} Variant of {FV} Like Somewhat
  Homomorphic Encryption Schemes}. In \bibinfo{booktitle}{\emph{{SAC}}}
  \emph{(\bibinfo{series}{Lecture Notes in Computer Science},
  Vol.~\bibinfo{volume}{10532})}. \bibinfo{publisher}{Springer},
  \bibinfo{pages}{423--442}.
\newblock


\bibitem[Bajard et~al\mbox{.}(2016b)]%
        {bajard2016full}
\bibfield{author}{\bibinfo{person}{Jean-Claude Bajard}, \bibinfo{person}{Julien
  Eynard}, \bibinfo{person}{M~Anwar Hasan}, {and} \bibinfo{person}{Vincent
  Zucca}.} \bibinfo{year}{2016}\natexlab{b}.
\newblock \showarticletitle{A full RNS variant of FV like somewhat homomorphic
  encryption schemes}. In \bibinfo{booktitle}{\emph{International Conference on
  Selected Areas in Cryptography}}. Springer, \bibinfo{pages}{423--442}.
\newblock


\bibitem[Balsa et~al\mbox{.}(2012a)]%
        {balsa2012ob}
\bibfield{author}{\bibinfo{person}{Ero Balsa}, \bibinfo{person}{Carmela
  Troncoso}, {and} \bibinfo{person}{Claudia Diaz}.}
  \bibinfo{year}{2012}\natexlab{a}.
\newblock \showarticletitle{OB-PWS: Obfuscation-based private web search}. In
  \bibinfo{booktitle}{\emph{2012 IEEE Symposium on Security and Privacy}}.
  IEEE, \bibinfo{pages}{491--505}.
\newblock


\bibitem[Balsa et~al\mbox{.}(2012b)]%
        {DBLP:conf/sp/BalsaTD12}
\bibfield{author}{\bibinfo{person}{Ero Balsa}, \bibinfo{person}{Carmela
  Troncoso}, {and} \bibinfo{person}{Claudia D{\'{\i}}az}.}
  \bibinfo{year}{2012}\natexlab{b}.
\newblock \showarticletitle{{OB-PWS:} Obfuscation-Based Private Web Search}. In
  \bibinfo{booktitle}{\emph{{IEEE} Symposium on Security and Privacy}}.
  \bibinfo{publisher}{{IEEE} Computer Society}, \bibinfo{pages}{491--505}.
\newblock


\bibitem[Benaloh(1994)]%
        {benaloh1994dense}
\bibfield{author}{\bibinfo{person}{Josh Benaloh}.}
  \bibinfo{year}{1994}\natexlab{}.
\newblock \showarticletitle{Dense probabilistic encryption}. In
  \bibinfo{booktitle}{\emph{Proceedings of the workshop on selected areas of
  cryptography}}. \bibinfo{pages}{120--128}.
\newblock


\bibitem[Bittau et~al\mbox{.}(2017)]%
        {DBLP:conf/sosp/BittauEMMRLRKTS17}
\bibfield{author}{\bibinfo{person}{Andrea Bittau}, \bibinfo{person}{{\'{U}}lfar
  Erlingsson}, \bibinfo{person}{Petros Maniatis}, \bibinfo{person}{Ilya
  Mironov}, \bibinfo{person}{Ananth Raghunathan}, \bibinfo{person}{David Lie},
  \bibinfo{person}{Mitch Rudominer}, \bibinfo{person}{Ushasree Kode},
  \bibinfo{person}{Julien Tinn{\'{e}}s}, {and} \bibinfo{person}{Bernhard
  Seefeld}.} \bibinfo{year}{2017}\natexlab{}.
\newblock \showarticletitle{Prochlo: Strong Privacy for Analytics in the
  Crowd}. In \bibinfo{booktitle}{\emph{{SOSP}}}. \bibinfo{publisher}{{ACM}},
  \bibinfo{pages}{441--459}.
\newblock


\bibitem[Boddeti(2018)]%
        {DBLP:conf/btas/Boddeti18}
\bibfield{author}{\bibinfo{person}{Vishnu~Naresh Boddeti}.}
  \bibinfo{year}{2018}\natexlab{}.
\newblock \showarticletitle{Secure Face Matching Using Fully Homomorphic
  Encryption}. In \bibinfo{booktitle}{\emph{{BTAS}}}.
  \bibinfo{publisher}{{IEEE}}, \bibinfo{pages}{1--10}.
\newblock


\bibitem[Brakerski et~al\mbox{.}(2014)]%
        {DBLP:journals/toct/BrakerskiGV14}
\bibfield{author}{\bibinfo{person}{Zvika Brakerski}, \bibinfo{person}{Craig
  Gentry}, {and} \bibinfo{person}{Vinod Vaikuntanathan}.}
  \bibinfo{year}{2014}\natexlab{}.
\newblock \showarticletitle{(Leveled) Fully Homomorphic Encryption without
  Bootstrapping}.
\newblock \bibinfo{journal}{\emph{{ACM} Trans. Comput. Theory}}
  \bibinfo{volume}{6}, \bibinfo{number}{3} (\bibinfo{year}{2014}),
  \bibinfo{pages}{13:1--13:36}.
\newblock


\bibitem[Bulck et~al\mbox{.}(2018)]%
        {DBLP:conf/uss/BulckMWGKPSWYS18}
\bibfield{author}{\bibinfo{person}{Jo~Van Bulck}, \bibinfo{person}{Marina
  Minkin}, \bibinfo{person}{Ofir Weisse}, \bibinfo{person}{Daniel Genkin},
  \bibinfo{person}{Baris Kasikci}, \bibinfo{person}{Frank Piessens},
  \bibinfo{person}{Mark Silberstein}, \bibinfo{person}{Thomas~F. Wenisch},
  \bibinfo{person}{Yuval Yarom}, {and} \bibinfo{person}{Raoul Strackx}.}
  \bibinfo{year}{2018}\natexlab{}.
\newblock \showarticletitle{Foreshadow: Extracting the Keys to the Intel {SGX}
  Kingdom with Transient Out-of-Order Execution}. In
  \bibinfo{booktitle}{\emph{{USENIX} Security Symposium}}.
  \bibinfo{publisher}{{USENIX} Association}, \bibinfo{pages}{991--1008}.
\newblock


\bibitem[Carrara et~al\mbox{.}(2022)]%
        {carrara2022approximate}
\bibfield{author}{\bibinfo{person}{Fabio Carrara}, \bibinfo{person}{Lucia
  Vadicamo}, \bibinfo{person}{Claudio Gennaro}, {and} \bibinfo{person}{Giuseppe
  Amato}.} \bibinfo{year}{2022}\natexlab{}.
\newblock \showarticletitle{Approximate nearest neighbor search on standard
  search engines}. In \bibinfo{booktitle}{\emph{International Conference on
  Similarity Search and Applications}}. Springer, \bibinfo{pages}{214--221}.
\newblock


\bibitem[Chaum(1981)]%
        {DBLP:journals/cacm/Chaum81}
\bibfield{author}{\bibinfo{person}{David Chaum}.}
  \bibinfo{year}{1981}\natexlab{}.
\newblock \showarticletitle{Untraceable Electronic Mail, Return Addresses, and
  Digital Pseudonyms}.
\newblock \bibinfo{journal}{\emph{Commun. {ACM}}} \bibinfo{volume}{24},
  \bibinfo{number}{2} (\bibinfo{year}{1981}), \bibinfo{pages}{84--88}.
\newblock


\bibitem[Chen et~al\mbox{.}(2020a)]%
        {DBLP:journals/ieeesp/ChenCXZLL20}
\bibfield{author}{\bibinfo{person}{Guoxing Chen}, \bibinfo{person}{Sanchuan
  Chen}, \bibinfo{person}{Yuan Xiao}, \bibinfo{person}{Yinqian Zhang},
  \bibinfo{person}{Zhiqiang Lin}, {and} \bibinfo{person}{Ten{-}Hwang Lai}.}
  \bibinfo{year}{2020}\natexlab{a}.
\newblock \showarticletitle{SgxPectre: Stealing Intel Secrets From {SGX}
  Enclaves via Speculative Execution}.
\newblock \bibinfo{journal}{\emph{{IEEE} Secur. Priv.}} \bibinfo{volume}{18},
  \bibinfo{number}{3} (\bibinfo{year}{2020}), \bibinfo{pages}{28--37}.
\newblock


\bibitem[Chen et~al\mbox{.}(2020b)]%
        {DBLP:conf/uss/0030CDPRR20}
\bibfield{author}{\bibinfo{person}{Hao Chen}, \bibinfo{person}{Ilaria
  Chillotti}, \bibinfo{person}{Yihe Dong}, \bibinfo{person}{Oxana Poburinnaya},
  \bibinfo{person}{Ilya~P. Razenshteyn}, {and} \bibinfo{person}{M.~Sadegh
  Riazi}.} \bibinfo{year}{2020}\natexlab{b}.
\newblock \showarticletitle{{SANNS:} Scaling Up Secure Approximate k-Nearest
  Neighbors Search}. In \bibinfo{booktitle}{\emph{{USENIX} Security
  Symposium}}. \bibinfo{publisher}{{USENIX} Association},
  \bibinfo{pages}{2111--2128}.
\newblock


\bibitem[Chen et~al\mbox{.}(2021)]%
        {DBLP:conf/uss/ChenVMDOG21}
\bibfield{author}{\bibinfo{person}{Zitai Chen}, \bibinfo{person}{Georgios
  Vasilakis}, \bibinfo{person}{Kit Murdock}, \bibinfo{person}{Edward Dean},
  \bibinfo{person}{David~F. Oswald}, {and} \bibinfo{person}{Flavio~D. Garcia}.}
  \bibinfo{year}{2021}\natexlab{}.
\newblock \showarticletitle{VoltPillager: Hardware-based fault injection
  attacks against Intel {SGX} Enclaves using the {SVID} voltage scaling
  interface}. In \bibinfo{booktitle}{\emph{{USENIX} Security Symposium}}.
  \bibinfo{publisher}{{USENIX} Association}, \bibinfo{pages}{699--716}.
\newblock


\bibitem[Cheu et~al\mbox{.}(2019)]%
        {DBLP:conf/eurocrypt/CheuSUZZ19}
\bibfield{author}{\bibinfo{person}{Albert Cheu}, \bibinfo{person}{Adam~D.
  Smith}, \bibinfo{person}{Jonathan~R. Ullman}, \bibinfo{person}{David Zeber},
  {and} \bibinfo{person}{Maxim Zhilyaev}.} \bibinfo{year}{2019}\natexlab{}.
\newblock \showarticletitle{Distributed Differential Privacy via Shuffling}. In
  \bibinfo{booktitle}{\emph{{EUROCRYPT} {(1)}}} \emph{(\bibinfo{series}{Lecture
  Notes in Computer Science}, Vol.~\bibinfo{volume}{11476})}.
  \bibinfo{publisher}{Springer}, \bibinfo{pages}{375--403}.
\newblock


\bibitem[Devlin et~al\mbox{.}(2019)]%
        {DBLP:conf/naacl/DevlinCLT19}
\bibfield{author}{\bibinfo{person}{Jacob Devlin}, \bibinfo{person}{Ming{-}Wei
  Chang}, \bibinfo{person}{Kenton Lee}, {and} \bibinfo{person}{Kristina
  Toutanova}.} \bibinfo{year}{2019}\natexlab{}.
\newblock \showarticletitle{{BERT:} Pre-training of Deep Bidirectional
  Transformers for Language Understanding}. In
  \bibinfo{booktitle}{\emph{{NAACL-HLT} {(1)}}}.
  \bibinfo{publisher}{Association for Computational Linguistics},
  \bibinfo{pages}{4171--4186}.
\newblock


\bibitem[Dingledine et~al\mbox{.}(2004)]%
        {DBLP:conf/uss/DingledineMS04}
\bibfield{author}{\bibinfo{person}{Roger Dingledine}, \bibinfo{person}{Nick
  Mathewson}, {and} \bibinfo{person}{Paul~F. Syverson}.}
  \bibinfo{year}{2004}\natexlab{}.
\newblock \showarticletitle{Tor: The Second-Generation Onion Router}. In
  \bibinfo{booktitle}{\emph{{USENIX} Security Symposium}}.
  \bibinfo{publisher}{{USENIX}}, \bibinfo{pages}{303--320}.
\newblock


\bibitem[Domingo{-}Ferrer et~al\mbox{.}(2009)]%
        {DBLP:journals/oir/Domingo-FerrerSC09}
\bibfield{author}{\bibinfo{person}{Josep Domingo{-}Ferrer},
  \bibinfo{person}{Agusti Solanas}, {and} \bibinfo{person}{Jordi
  Castell{\`{a}}{-}Roca}.} \bibinfo{year}{2009}\natexlab{}.
\newblock \showarticletitle{h(k)-private information retrieval from
  privacy-uncooperative queryable databases."{\textgreater}h(k)-private
  information retrieval from privacy-uncooperative queryable databases}.
\newblock \bibinfo{journal}{\emph{Online Inf. Rev.}} \bibinfo{volume}{33},
  \bibinfo{number}{4} (\bibinfo{year}{2009}), \bibinfo{pages}{720--744}.
\newblock


\bibitem[Douze et~al\mbox{.}(2025)]%
        {douze2025faisslibrary}
\bibfield{author}{\bibinfo{person}{Matthijs Douze}, \bibinfo{person}{Alexandr
  Guzhva}, \bibinfo{person}{Chengqi Deng}, \bibinfo{person}{Jeff Johnson},
  \bibinfo{person}{Gergely Szilvasy}, \bibinfo{person}{Pierre-Emmanuel
  Mazaré}, \bibinfo{person}{Maria Lomeli}, \bibinfo{person}{Lucas Hosseini},
  {and} \bibinfo{person}{Hervé Jégou}.} \bibinfo{year}{2025}\natexlab{}.
\newblock \bibinfo{title}{The Faiss library}.
\newblock
\showeprint[arxiv]{2401.08281}~[cs.LG]
\urldef\tempurl%
\url{https://arxiv.org/abs/2401.08281}
\showURL{%
\tempurl}


\bibitem[Dwork et~al\mbox{.}(2006a)]%
        {DBLP:conf/eurocrypt/DworkKMMN06}
\bibfield{author}{\bibinfo{person}{Cynthia Dwork}, \bibinfo{person}{Krishnaram
  Kenthapadi}, \bibinfo{person}{Frank McSherry}, \bibinfo{person}{Ilya
  Mironov}, {and} \bibinfo{person}{Moni Naor}.}
  \bibinfo{year}{2006}\natexlab{a}.
\newblock \showarticletitle{Our Data, Ourselves: Privacy Via Distributed Noise
  Generation}. In \bibinfo{booktitle}{\emph{{EUROCRYPT}}}
  \emph{(\bibinfo{series}{Lecture Notes in Computer Science},
  Vol.~\bibinfo{volume}{4004})}. \bibinfo{publisher}{Springer},
  \bibinfo{pages}{486--503}.
\newblock


\bibitem[Dwork et~al\mbox{.}(2006b)]%
        {DBLP:conf/tcc/DworkMNS06}
\bibfield{author}{\bibinfo{person}{Cynthia Dwork}, \bibinfo{person}{Frank
  McSherry}, \bibinfo{person}{Kobbi Nissim}, {and} \bibinfo{person}{Adam~D.
  Smith}.} \bibinfo{year}{2006}\natexlab{b}.
\newblock \showarticletitle{Calibrating Noise to Sensitivity in Private Data
  Analysis}. In \bibinfo{booktitle}{\emph{{TCC}}}
  \emph{(\bibinfo{series}{Lecture Notes in Computer Science},
  Vol.~\bibinfo{volume}{3876})}. \bibinfo{publisher}{Springer},
  \bibinfo{pages}{265--284}.
\newblock


\bibitem[Dwork et~al\mbox{.}(2010)]%
        {DBLP:conf/stoc/DworkRV10}
\bibfield{author}{\bibinfo{person}{Cynthia Dwork}, \bibinfo{person}{Guy~N.
  Rothblum}, {and} \bibinfo{person}{Salil~P. Vadhan}.}
  \bibinfo{year}{2010}\natexlab{}.
\newblock \showarticletitle{Boosting and Differential Privacy}. In
  \bibinfo{booktitle}{\emph{{FOCS}}}. \bibinfo{publisher}{{IEEE}},
  \bibinfo{pages}{51--60}.
\newblock


\bibitem[Engelsma et~al\mbox{.}(2022)]%
        {DBLP:journals/tbbis/EngelsmaJB22}
\bibfield{author}{\bibinfo{person}{Joshua~J. Engelsma},
  \bibinfo{person}{Anil~K. Jain}, {and} \bibinfo{person}{Vishnu~Naresh
  Boddeti}.} \bibinfo{year}{2022}\natexlab{}.
\newblock \showarticletitle{{HERS:} Homomorphically Encrypted Representation
  Search}.
\newblock \bibinfo{journal}{\emph{{IEEE} Trans. Biom. Behav. Identity Sci.}}
  \bibinfo{volume}{4}, \bibinfo{number}{3} (\bibinfo{year}{2022}),
  \bibinfo{pages}{349--360}.
\newblock


\bibitem[Erlingsson et~al\mbox{.}(2020)]%
        {DBLP:journals/corr/abs-2001-03618}
\bibfield{author}{\bibinfo{person}{{\'{U}}lfar Erlingsson},
  \bibinfo{person}{Vitaly Feldman}, \bibinfo{person}{Ilya Mironov},
  \bibinfo{person}{Ananth Raghunathan}, \bibinfo{person}{Shuang Song},
  \bibinfo{person}{Kunal Talwar}, {and} \bibinfo{person}{Abhradeep Thakurta}.}
  \bibinfo{year}{2020}\natexlab{}.
\newblock \showarticletitle{Encode, Shuffle, Analyze Privacy Revisited:
  Formalizations and Empirical Evaluation}.
\newblock \bibinfo{journal}{\emph{CoRR}}  \bibinfo{volume}{abs/2001.03618}
  (\bibinfo{year}{2020}).
\newblock


\bibitem[Fan and Vercauteren(2012)]%
        {DBLP:journals/iacr/FanV12}
\bibfield{author}{\bibinfo{person}{Junfeng Fan} {and} \bibinfo{person}{Frederik
  Vercauteren}.} \bibinfo{year}{2012}\natexlab{}.
\newblock \showarticletitle{Somewhat Practical Fully Homomorphic Encryption}.
\newblock \bibinfo{journal}{\emph{{IACR} Cryptol. ePrint Arch.}}
  (\bibinfo{year}{2012}), \bibinfo{pages}{144}.
\newblock


\bibitem[Gasc{\'{o}}n et~al\mbox{.}(2024)]%
        {DBLP:conf/ccs/GasconIKL0024}
\bibfield{author}{\bibinfo{person}{Adri{\`{a}} Gasc{\'{o}}n},
  \bibinfo{person}{Yuval Ishai}, \bibinfo{person}{Mahimna Kelkar},
  \bibinfo{person}{Baiyu Li}, \bibinfo{person}{Yiping Ma}, {and}
  \bibinfo{person}{Mariana Raykova}.} \bibinfo{year}{2024}\natexlab{}.
\newblock \showarticletitle{Computationally Secure Aggregation and Private
  Information Retrieval in the Shuffle Model}. In
  \bibinfo{booktitle}{\emph{{CCS}}}. \bibinfo{publisher}{{ACM}},
  \bibinfo{pages}{4122--4136}.
\newblock


\bibitem[Gervais et~al\mbox{.}(2014a)]%
        {gervais2014quantifying}
\bibfield{author}{\bibinfo{person}{Arthur Gervais}, \bibinfo{person}{Reza
  Shokri}, \bibinfo{person}{Adish Singla}, \bibinfo{person}{Srdjan Capkun},
  {and} \bibinfo{person}{Vincent Lenders}.} \bibinfo{year}{2014}\natexlab{a}.
\newblock \showarticletitle{Quantifying web-search privacy}. In
  \bibinfo{booktitle}{\emph{Proceedings of the 2014 ACM SIGSAC Conference on
  Computer and Communications Security}}. \bibinfo{pages}{966--977}.
\newblock


\bibitem[Gervais et~al\mbox{.}(2014b)]%
        {DBLP:conf/ccs/GervaisSSCL14}
\bibfield{author}{\bibinfo{person}{Arthur Gervais}, \bibinfo{person}{Reza
  Shokri}, \bibinfo{person}{Adish Singla}, \bibinfo{person}{Srdjan Capkun},
  {and} \bibinfo{person}{Vincent Lenders}.} \bibinfo{year}{2014}\natexlab{b}.
\newblock \showarticletitle{Quantifying Web-Search Privacy}. In
  \bibinfo{booktitle}{\emph{{CCS}}}. \bibinfo{publisher}{{ACM}},
  \bibinfo{pages}{966--977}.
\newblock


\bibitem[Ghazi et~al\mbox{.}(2020)]%
        {DBLP:conf/icml/Ghazi0MP20}
\bibfield{author}{\bibinfo{person}{Badih Ghazi}, \bibinfo{person}{Ravi Kumar},
  \bibinfo{person}{Pasin Manurangsi}, {and} \bibinfo{person}{Rasmus Pagh}.}
  \bibinfo{year}{2020}\natexlab{}.
\newblock \showarticletitle{Private Counting from Anonymous Messages:
  Near-Optimal Accuracy with Vanishing Communication Overhead}. In
  \bibinfo{booktitle}{\emph{{ICML}}} \emph{(\bibinfo{series}{Proceedings of
  Machine Learning Research}, Vol.~\bibinfo{volume}{119})}.
  \bibinfo{publisher}{{PMLR}}, \bibinfo{pages}{3505--3514}.
\newblock


\bibitem[Goldschlag et~al\mbox{.}(1999)]%
        {DBLP:journals/cacm/GoldschlagRS99}
\bibfield{author}{\bibinfo{person}{David~M. Goldschlag},
  \bibinfo{person}{Michael~G. Reed}, {and} \bibinfo{person}{Paul~F. Syverson}.}
  \bibinfo{year}{1999}\natexlab{}.
\newblock \showarticletitle{Onion Routing}.
\newblock \bibinfo{journal}{\emph{Commun. {ACM}}} \bibinfo{volume}{42},
  \bibinfo{number}{2} (\bibinfo{year}{1999}), \bibinfo{pages}{39--41}.
\newblock


\bibitem[Gruss et~al\mbox{.}(2018)]%
        {DBLP:conf/sp/GrussLSGJOSY18}
\bibfield{author}{\bibinfo{person}{Daniel Gruss}, \bibinfo{person}{Moritz
  Lipp}, \bibinfo{person}{Michael Schwarz}, \bibinfo{person}{Daniel Genkin},
  \bibinfo{person}{Jonas Juffinger}, \bibinfo{person}{Sioli O'Connell},
  \bibinfo{person}{Wolfgang Schoechl}, {and} \bibinfo{person}{Yuval Yarom}.}
  \bibinfo{year}{2018}\natexlab{}.
\newblock \showarticletitle{Another Flip in the Wall of Rowhammer Defenses}. In
  \bibinfo{booktitle}{\emph{{IEEE} Symposium on Security and Privacy}}.
  \bibinfo{publisher}{{IEEE} Computer Society}, \bibinfo{pages}{245--261}.
\newblock


\bibitem[Halevi and Shoup(2018)]%
        {DBLP:conf/crypto/HaleviS18}
\bibfield{author}{\bibinfo{person}{Shai Halevi} {and} \bibinfo{person}{Victor
  Shoup}.} \bibinfo{year}{2018}\natexlab{}.
\newblock \showarticletitle{Faster Homomorphic Linear Transformations in
  HElib}. In \bibinfo{booktitle}{\emph{{CRYPTO} {(1)}}}
  \emph{(\bibinfo{series}{Lecture Notes in Computer Science},
  Vol.~\bibinfo{volume}{10991})}. \bibinfo{publisher}{Springer},
  \bibinfo{pages}{93--120}.
\newblock


\bibitem[Harvey(2014)]%
        {harvey2014faster}
\bibfield{author}{\bibinfo{person}{David Harvey}.}
  \bibinfo{year}{2014}\natexlab{}.
\newblock \showarticletitle{Faster arithmetic for number-theoretic transforms}.
\newblock \bibinfo{journal}{\emph{Journal of Symbolic Computation}}
  \bibinfo{volume}{60} (\bibinfo{year}{2014}), \bibinfo{pages}{113--119}.
\newblock


\bibitem[Henzinger et~al\mbox{.}(2023a)]%
        {DBLP:conf/sosp/HenzingerDCZ23}
\bibfield{author}{\bibinfo{person}{Alexandra Henzinger}, \bibinfo{person}{Emma
  Dauterman}, \bibinfo{person}{Henry Corrigan{-}Gibbs}, {and}
  \bibinfo{person}{Nickolai Zeldovich}.} \bibinfo{year}{2023}\natexlab{a}.
\newblock \showarticletitle{Private Web Search with Tiptoe}. In
  \bibinfo{booktitle}{\emph{{SOSP}}}. \bibinfo{publisher}{{ACM}},
  \bibinfo{pages}{396--416}.
\newblock


\bibitem[Henzinger et~al\mbox{.}(2023b)]%
        {DBLP:conf/uss/HenzingerHCMV23}
\bibfield{author}{\bibinfo{person}{Alexandra Henzinger},
  \bibinfo{person}{Matthew~M. Hong}, \bibinfo{person}{Henry Corrigan{-}Gibbs},
  \bibinfo{person}{Sarah Meiklejohn}, {and} \bibinfo{person}{Vinod
  Vaikuntanathan}.} \bibinfo{year}{2023}\natexlab{b}.
\newblock \showarticletitle{One Server for the Price of Two: Simple and Fast
  Single-Server Private Information Retrieval}. In
  \bibinfo{booktitle}{\emph{{USENIX} Security Symposium}}.
  \bibinfo{publisher}{{USENIX} Association}, \bibinfo{pages}{3889--3905}.
\newblock


\bibitem[Hofst{\"a}tter et~al\mbox{.}(2021)]%
        {Hofstaetter2021_tasb_dense_retrieval}
\bibfield{author}{\bibinfo{person}{Sebastian Hofst{\"a}tter},
  \bibinfo{person}{Sheng-Chieh Lin}, \bibinfo{person}{Jheng-Hong Yang},
  \bibinfo{person}{Jimmy Lin}, {and} \bibinfo{person}{Allan Hanbury}.}
  \bibinfo{year}{2021}\natexlab{}.
\newblock \showarticletitle{{Efficiently Teaching an Effective Dense Retriever
  with Balanced Topic Aware Sampling}}. In \bibinfo{booktitle}{\emph{Proc. of
  SIGIR}}.
\newblock
\urldef\tempurl%
\url{https://huggingface.co/sebastian-hofstaetter/distilbert-dot-tas_b-b256-msmarco,
  https://huggingface.co/sentence-transformers/msmarco-distilbert-base-tas-b}
\showURL{%
\tempurl}


\bibitem[Huang et~al\mbox{.}(2022)]%
        {DBLP:conf/uss/HuangLHD22}
\bibfield{author}{\bibinfo{person}{Zhicong Huang}, \bibinfo{person}{Wen{-}jie
  Lu}, \bibinfo{person}{Cheng Hong}, {and} \bibinfo{person}{Jiansheng Ding}.}
  \bibinfo{year}{2022}\natexlab{}.
\newblock \showarticletitle{Cheetah: Lean and Fast Secure Two-Party Deep Neural
  Network Inference}. In \bibinfo{booktitle}{\emph{{USENIX} Security
  Symposium}}. \bibinfo{publisher}{{USENIX} Association},
  \bibinfo{pages}{809--826}.
\newblock


\bibitem[Ishai et~al\mbox{.}(2024)]%
        {DBLP:conf/citc/IshaiKL024}
\bibfield{author}{\bibinfo{person}{Yuval Ishai}, \bibinfo{person}{Mahimna
  Kelkar}, \bibinfo{person}{Daniel Lee}, {and} \bibinfo{person}{Yiping Ma}.}
  \bibinfo{year}{2024}\natexlab{}.
\newblock \showarticletitle{Information-Theoretic Single-Server {PIR} in the
  Shuffle Model}. In \bibinfo{booktitle}{\emph{{ITC}}}
  \emph{(\bibinfo{series}{LIPIcs}, Vol.~\bibinfo{volume}{304})}.
  \bibinfo{publisher}{Schloss Dagstuhl - Leibniz-Zentrum f{\"{u}}r Informatik},
  \bibinfo{pages}{6:1--6:23}.
\newblock


\bibitem[Kim et~al\mbox{.}(2021a)]%
        {kim2021revisiting}
\bibfield{author}{\bibinfo{person}{Andrey Kim}, \bibinfo{person}{Yuriy
  Polyakov}, {and} \bibinfo{person}{Vincent Zucca}.}
  \bibinfo{year}{2021}\natexlab{a}.
\newblock \showarticletitle{Revisiting homomorphic encryption schemes for
  finite fields}. In \bibinfo{booktitle}{\emph{Advances in
  Cryptology--ASIACRYPT 2021: 27th International Conference on the Theory and
  Application of Cryptology and Information Security, Singapore, December
  6--10, 2021, Proceedings, Part III 27}}. Springer, \bibinfo{pages}{608--639}.
\newblock


\bibitem[Kim et~al\mbox{.}(2021b)]%
        {DBLP:conf/asiacrypt/KimPZ21}
\bibfield{author}{\bibinfo{person}{Andrey Kim}, \bibinfo{person}{Yuriy
  Polyakov}, {and} \bibinfo{person}{Vincent Zucca}.}
  \bibinfo{year}{2021}\natexlab{b}.
\newblock \showarticletitle{Revisiting Homomorphic Encryption Schemes for
  Finite Fields}. In \bibinfo{booktitle}{\emph{{ASIACRYPT} {(3)}}}
  \emph{(\bibinfo{series}{Lecture Notes in Computer Science},
  Vol.~\bibinfo{volume}{13092})}. \bibinfo{publisher}{Springer},
  \bibinfo{pages}{608--639}.
\newblock


\bibitem[Li et~al\mbox{.}(2025)]%
        {DBLP:conf/ccs/LiHZHLWC25}
\bibfield{author}{\bibinfo{person}{Jingyu Li}, \bibinfo{person}{Zhicong Huang},
  \bibinfo{person}{Min Zhang}, \bibinfo{person}{Cheng Hong},
  \bibinfo{person}{Jian Liu}, \bibinfo{person}{Tao Wei}, {and}
  \bibinfo{person}{Wenguang Chen}.} \bibinfo{year}{2025}\natexlab{}.
\newblock \showarticletitle{Panther: Private Approximate Nearest Neighbor
  Search in the Single Server Setting}. In \bibinfo{booktitle}{\emph{{CCS}}}.
  \bibinfo{publisher}{{ACM}}, \bibinfo{pages}{1--15}.
\newblock


\bibitem[Li et~al\mbox{.}(2021)]%
        {DBLP:conf/osdi/LiZZ0XA021}
\bibfield{author}{\bibinfo{person}{Mingyu Li}, \bibinfo{person}{Jinhao Zhu},
  \bibinfo{person}{Tianxu Zhang}, \bibinfo{person}{Cheng Tan},
  \bibinfo{person}{Yubin Xia}, \bibinfo{person}{Sebastian Angel}, {and}
  \bibinfo{person}{Haibo Chen}.} \bibinfo{year}{2021}\natexlab{}.
\newblock \showarticletitle{Bringing Decentralized Search to Decentralized
  Services}. In \bibinfo{booktitle}{\emph{{OSDI}}}.
  \bibinfo{publisher}{{USENIX} Association}, \bibinfo{pages}{331--347}.
\newblock


\bibitem[Menon and Wu(2024)]%
        {DBLP:conf/uss/MenonW24}
\bibfield{author}{\bibinfo{person}{Samir~Jordan Menon} {and}
  \bibinfo{person}{David~J. Wu}.} \bibinfo{year}{2024}\natexlab{}.
\newblock \showarticletitle{{YPIR}: High-Throughput Single-Server {PIR} with
  Silent Preprocessing}. In \bibinfo{booktitle}{\emph{{USENIX} Security
  Symposium}}. \bibinfo{publisher}{{USENIX} Association}.
\newblock


\bibitem[Mironov(2017)]%
        {DBLP:conf/csfw/Mironov17}
\bibfield{author}{\bibinfo{person}{Ilya Mironov}.}
  \bibinfo{year}{2017}\natexlab{}.
\newblock \showarticletitle{R\'{e}nyi Differential Privacy}. In
  \bibinfo{booktitle}{\emph{{CSF}}}. \bibinfo{publisher}{{IEEE}},
  \bibinfo{pages}{263--275}.
\newblock


\bibitem[Mokhtar et~al\mbox{.}(2017)]%
        {DBLP:conf/middleware/MokhtarBFPPS17}
\bibfield{author}{\bibinfo{person}{Sonia~Ben Mokhtar}, \bibinfo{person}{Antoine
  Boutet}, \bibinfo{person}{Pascal Felber}, \bibinfo{person}{Marcelo Pasin},
  \bibinfo{person}{Rafael Pires}, {and} \bibinfo{person}{Valerio Schiavoni}.}
  \bibinfo{year}{2017}\natexlab{}.
\newblock \showarticletitle{X-search: revisiting private web search using intel
  {SGX}}. In \bibinfo{booktitle}{\emph{Middleware}}.
  \bibinfo{publisher}{{ACM}}, \bibinfo{pages}{198--208}.
\newblock


\bibitem[Morris et~al\mbox{.}(2023)]%
        {DBLP:conf/emnlp/MorrisKSR23}
\bibfield{author}{\bibinfo{person}{John~X. Morris}, \bibinfo{person}{Volodymyr
  Kuleshov}, \bibinfo{person}{Vitaly Shmatikov}, {and}
  \bibinfo{person}{Alexander~M. Rush}.} \bibinfo{year}{2023}\natexlab{}.
\newblock \showarticletitle{Text Embeddings Reveal (Almost) As Much As Text}.
  In \bibinfo{booktitle}{\emph{{EMNLP}}}. \bibinfo{publisher}{Association for
  Computational Linguistics}, \bibinfo{pages}{12448--12460}.
\newblock


\bibitem[Murdock et~al\mbox{.}(2020)]%
        {DBLP:journals/ieeesp/MurdockOGBPG20}
\bibfield{author}{\bibinfo{person}{Kit Murdock}, \bibinfo{person}{David~F.
  Oswald}, \bibinfo{person}{Flavio~D. Garcia}, \bibinfo{person}{Jo~Van Bulck},
  \bibinfo{person}{Frank Piessens}, {and} \bibinfo{person}{Daniel Gruss}.}
  \bibinfo{year}{2020}\natexlab{}.
\newblock \showarticletitle{Plundervolt: How a Little Bit of Undervolting Can
  Create a Lot of Trouble}.
\newblock \bibinfo{journal}{\emph{{IEEE} Secur. Priv.}} \bibinfo{volume}{18},
  \bibinfo{number}{5} (\bibinfo{year}{2020}), \bibinfo{pages}{28--37}.
\newblock


\bibitem[Murugesan and Clifton(2009a)]%
        {murugesan2009providing}
\bibfield{author}{\bibinfo{person}{Mummoorthy Murugesan} {and}
  \bibinfo{person}{Chris Clifton}.} \bibinfo{year}{2009}\natexlab{a}.
\newblock \showarticletitle{Providing privacy through plausibly deniable
  search}. In \bibinfo{booktitle}{\emph{Proceedings of the 2009 SIAM
  International Conference on Data Mining}}. SIAM, \bibinfo{pages}{768--779}.
\newblock


\bibitem[Murugesan and Clifton(2009b)]%
        {DBLP:conf/sdm/MurugesanC09}
\bibfield{author}{\bibinfo{person}{Mummoorthy Murugesan} {and}
  \bibinfo{person}{Chris Clifton}.} \bibinfo{year}{2009}\natexlab{b}.
\newblock \showarticletitle{Providing Privacy through Plausibly Deniable
  Search}. In \bibinfo{booktitle}{\emph{{SDM}}}. \bibinfo{publisher}{{SIAM}},
  \bibinfo{pages}{768--779}.
\newblock


\bibitem[Nissenbaum and Daniel(2009)]%
        {nissenbaum2009trackmenot}
\bibfield{author}{\bibinfo{person}{Helen Nissenbaum} {and}
  \bibinfo{person}{Howe Daniel}.} \bibinfo{year}{2009}\natexlab{}.
\newblock \showarticletitle{TrackMeNot: Resisting surveillance in web search}.
\newblock  (\bibinfo{year}{2009}).
\newblock


\bibitem[Pang et~al\mbox{.}(2010)]%
        {DBLP:journals/pvldb/PangDX10}
\bibfield{author}{\bibinfo{person}{HweeHwa Pang}, \bibinfo{person}{Xuhua Ding},
  {and} \bibinfo{person}{Xiaokui Xiao}.} \bibinfo{year}{2010}\natexlab{}.
\newblock \showarticletitle{Embellishing Text Search Queries To Protect User
  Privacy}.
\newblock \bibinfo{journal}{\emph{Proc. {VLDB} Endow.}} \bibinfo{volume}{3},
  \bibinfo{number}{1} (\bibinfo{year}{2010}), \bibinfo{pages}{598--607}.
\newblock


\bibitem[Peddinti and Saxena(2010)]%
        {peddinti2010privacy}
\bibfield{author}{\bibinfo{person}{Sai~Teja Peddinti} {and}
  \bibinfo{person}{Nitesh Saxena}.} \bibinfo{year}{2010}\natexlab{}.
\newblock \showarticletitle{On the privacy of web search based on query
  obfuscation: A case study of trackmenot}. In
  \bibinfo{booktitle}{\emph{Privacy Enhancing Technologies: 10th International
  Symposium, PETS 2010, Berlin, Germany, July 21-23, 2010. Proceedings 10}}.
  Springer, \bibinfo{pages}{19--37}.
\newblock


\bibitem[Peddinti and Saxena(2014)]%
        {DBLP:journals/jcs/PeddintiS14}
\bibfield{author}{\bibinfo{person}{Sai~Teja Peddinti} {and}
  \bibinfo{person}{Nitesh Saxena}.} \bibinfo{year}{2014}\natexlab{}.
\newblock \showarticletitle{Web search query privacy: Evaluating query
  obfuscation and anonymizing networks}.
\newblock \bibinfo{journal}{\emph{J. Comput. Secur.}} \bibinfo{volume}{22},
  \bibinfo{number}{1} (\bibinfo{year}{2014}), \bibinfo{pages}{155--199}.
\newblock


\bibitem[Petit et~al\mbox{.}(2016)]%
        {DBLP:journals/jisa/PetitCBMCBK16}
\bibfield{author}{\bibinfo{person}{Albin Petit}, \bibinfo{person}{Thomas
  Cerqueus}, \bibinfo{person}{Antoine Boutet}, \bibinfo{person}{Sonia~Ben
  Mokhtar}, \bibinfo{person}{David Coquil}, \bibinfo{person}{Lionel Brunie},
  {and} \bibinfo{person}{Harald Kosch}.} \bibinfo{year}{2016}\natexlab{}.
\newblock \showarticletitle{SimAttack: private web search under fire}.
\newblock \bibinfo{journal}{\emph{J. Internet Serv. Appl.}}
  \bibinfo{volume}{7}, \bibinfo{number}{1} (\bibinfo{year}{2016}),
  \bibinfo{pages}{2:1--2:17}.
\newblock


\bibitem[Petit et~al\mbox{.}(2015a)]%
        {petit2015peas}
\bibfield{author}{\bibinfo{person}{Albin Petit}, \bibinfo{person}{Thomas
  Cerqueus}, \bibinfo{person}{Sonia~Ben Mokhtar}, \bibinfo{person}{Lionel
  Brunie}, {and} \bibinfo{person}{Harald Kosch}.}
  \bibinfo{year}{2015}\natexlab{a}.
\newblock \showarticletitle{PEAS: Private, efficient and accurate web search}.
  In \bibinfo{booktitle}{\emph{2015 IEEE Trustcom/BigDataSE/ISPA}},
  Vol.~\bibinfo{volume}{1}. IEEE, \bibinfo{pages}{571--580}.
\newblock


\bibitem[Petit et~al\mbox{.}(2015b)]%
        {DBLP:conf/trustcom/PetitCMBK15}
\bibfield{author}{\bibinfo{person}{Albin Petit}, \bibinfo{person}{Thomas
  Cerqueus}, \bibinfo{person}{Sonia~Ben Mokhtar}, \bibinfo{person}{Lionel
  Brunie}, {and} \bibinfo{person}{Harald Kosch}.}
  \bibinfo{year}{2015}\natexlab{b}.
\newblock \showarticletitle{{PEAS:} Private, Efficient and Accurate Web
  Search}. In \bibinfo{booktitle}{\emph{TrustCom/BigDataSE/ISPA {(1)}}}.
  \bibinfo{publisher}{{IEEE}}, \bibinfo{pages}{571--580}.
\newblock


\bibitem[Pires et~al\mbox{.}(2018)]%
        {DBLP:conf/icdcs/GoltzschePMBBFK18}
\bibfield{author}{\bibinfo{person}{Rafael Pires}, \bibinfo{person}{David
  Goltzsche}, \bibinfo{person}{Sonia~Ben Mokhtar}, \bibinfo{person}{Sara
  Bouchenak}, \bibinfo{person}{Antoine Boutet}, \bibinfo{person}{Pascal
  Felber}, \bibinfo{person}{R{\"{u}}diger Kapitza}, \bibinfo{person}{Marcelo
  Pasin}, {and} \bibinfo{person}{Valerio Schiavoni}.}
  \bibinfo{year}{2018}\natexlab{}.
\newblock \showarticletitle{{CYCLOSA:} Decentralizing Private Web Search
  through SGX-Based Browser Extensions}. In
  \bibinfo{booktitle}{\emph{{ICDCS}}}. \bibinfo{publisher}{{IEEE} Computer
  Society}, \bibinfo{pages}{467--477}.
\newblock


\bibitem[Rebollo-Monedero and Forn{\'e}(2010)]%
        {rebollo2010optimized}
\bibfield{author}{\bibinfo{person}{David Rebollo-Monedero} {and}
  \bibinfo{person}{Jordi Forn{\'e}}.} \bibinfo{year}{2010}\natexlab{}.
\newblock \showarticletitle{Optimized query forgery for private information
  retrieval}.
\newblock \bibinfo{journal}{\emph{IEEE Transactions on Information Theory}}
  \bibinfo{volume}{56}, \bibinfo{number}{9} (\bibinfo{year}{2010}),
  \bibinfo{pages}{4631--4642}.
\newblock


\bibitem[Rebollo{-}Monedero and Forn{\'{e}}(2010)]%
        {DBLP:journals/tit/Rebollo-MonederoF10}
\bibfield{author}{\bibinfo{person}{David Rebollo{-}Monedero} {and}
  \bibinfo{person}{Jordi Forn{\'{e}}}.} \bibinfo{year}{2010}\natexlab{}.
\newblock \showarticletitle{Optimized query forgery for private information
  retrieval}.
\newblock \bibinfo{journal}{\emph{{IEEE} Trans. Inf. Theory}}
  \bibinfo{volume}{56}, \bibinfo{number}{9} (\bibinfo{year}{2010}),
  \bibinfo{pages}{4631--4642}.
\newblock


\bibitem[Regev(2009)]%
        {DBLP:journals/jacm/Regev09}
\bibfield{author}{\bibinfo{person}{Oded Regev}.}
  \bibinfo{year}{2009}\natexlab{}.
\newblock \showarticletitle{On lattices, learning with errors, random linear
  codes, and cryptography}.
\newblock \bibinfo{journal}{\emph{J. {ACM}}} \bibinfo{volume}{56},
  \bibinfo{number}{6} (\bibinfo{year}{2009}), \bibinfo{pages}{34:1--34:40}.
\newblock


\bibitem[Reimers and Gurevych(2019)]%
        {reimers-2019-sentence-bert}
\bibfield{author}{\bibinfo{person}{Nils Reimers} {and} \bibinfo{person}{Iryna
  Gurevych}.} \bibinfo{year}{2019}\natexlab{}.
\newblock \showarticletitle{Sentence-BERT: Sentence Embeddings using Siamese
  BERT-Networks}. In \bibinfo{booktitle}{\emph{Proceedings of the 2019
  Conference on Empirical Methods in Natural Language Processing}}.
  \bibinfo{publisher}{Association for Computational Linguistics}.
\newblock
\urldef\tempurl%
\url{https://arxiv.org/abs/1908.10084, https://www.sbert.net}
\showURL{%
\tempurl}


\bibitem[{\c{S}}akar and Emekci(2025)]%
        {csakar2025maximizing}
\bibfield{author}{\bibinfo{person}{Tolga {\c{S}}akar} {and}
  \bibinfo{person}{Hakan Emekci}.} \bibinfo{year}{2025}\natexlab{}.
\newblock \showarticletitle{Maximizing RAG efficiency: A comparative analysis
  of RAG methods}.
\newblock \bibinfo{journal}{\emph{Natural Language Processing}}
  \bibinfo{volume}{31}, \bibinfo{number}{1} (\bibinfo{year}{2025}),
  \bibinfo{pages}{1--25}.
\newblock


\bibitem[Sawarkar et~al\mbox{.}(2024)]%
        {sawarkar2024blended}
\bibfield{author}{\bibinfo{person}{Kunal Sawarkar}, \bibinfo{person}{Abhilasha
  Mangal}, {and} \bibinfo{person}{Shivam~Raj Solanki}.}
  \bibinfo{year}{2024}\natexlab{}.
\newblock \showarticletitle{Blended rag: Improving rag (retriever-augmented
  generation) accuracy with semantic search and hybrid query-based retrievers}.
  In \bibinfo{booktitle}{\emph{2024 IEEE 7th international conference on
  multimedia information processing and retrieval (MIPR)}}. IEEE,
  \bibinfo{pages}{155--161}.
\newblock


\bibitem[{Semrush}(2025)]%
        {semrush2025}
\bibfield{author}{\bibinfo{person}{{Semrush}}.}
  \bibinfo{year}{2025}\natexlab{}.
\newblock \bibinfo{title}{Google Search Statistics}.
\newblock
\urldef\tempurl%
\url{https://www.semrush.com/blog/google-search-statistics/}
\showURL{%
\tempurl}
\newblock
\shownote{Accessed: 2025-02-17}.


\bibitem[Servan{-}Schreiber et~al\mbox{.}(2022)]%
        {DBLP:conf/sp/Servan-Schreiber22}
\bibfield{author}{\bibinfo{person}{Sacha Servan{-}Schreiber},
  \bibinfo{person}{Simon Langowski}, {and} \bibinfo{person}{Srinivas Devadas}.}
  \bibinfo{year}{2022}\natexlab{}.
\newblock \showarticletitle{Private Approximate Nearest Neighbor Search with
  Sublinear Communication}. In \bibinfo{booktitle}{\emph{{SP}}}.
  \bibinfo{publisher}{{IEEE}}, \bibinfo{pages}{911--929}.
\newblock


\bibitem[Toledo et~al\mbox{.}(2016)]%
        {DBLP:journals/popets/ToledoDG16}
\bibfield{author}{\bibinfo{person}{Raphael~R. Toledo}, \bibinfo{person}{George
  Danezis}, {and} \bibinfo{person}{Ian Goldberg}.}
  \bibinfo{year}{2016}\natexlab{}.
\newblock \showarticletitle{Lower-Cost {\(\in\)}-Private Information
  Retrieval}.
\newblock \bibinfo{journal}{\emph{Proc. Priv. Enhancing Technol.}}
  \bibinfo{volume}{2016}, \bibinfo{number}{4} (\bibinfo{year}{2016}),
  \bibinfo{pages}{184--201}.
\newblock


\bibitem[Toubiana et~al\mbox{.}(2011)]%
        {DBLP:journals/corr/abs-1109-4677}
\bibfield{author}{\bibinfo{person}{Vincent Toubiana},
  \bibinfo{person}{Lakshminarayanan Subramanian}, {and} \bibinfo{person}{Helen
  Nissenbaum}.} \bibinfo{year}{2011}\natexlab{}.
\newblock \showarticletitle{TrackMeNot: Enhancing the privacy of Web Search}.
\newblock \bibinfo{journal}{\emph{CoRR}}  \bibinfo{volume}{abs/1109.4677}
  (\bibinfo{year}{2011}).
\newblock


\bibitem[Vadhan(2017)]%
        {vadhan2017complexity}
\bibfield{author}{\bibinfo{person}{Salil Vadhan}.}
  \bibinfo{year}{2017}\natexlab{}.
\newblock \showarticletitle{The complexity of differential privacy}.
\newblock \bibinfo{journal}{\emph{Tutorials on the Foundations of Cryptography:
  Dedicated to Oded Goldreich}} (\bibinfo{year}{2017}),
  \bibinfo{pages}{347--450}.
\newblock


\bibitem[Xu et~al\mbox{.}(2021)]%
        {xu2021utilitarian}
\bibfield{author}{\bibinfo{person}{Zekun Xu}, \bibinfo{person}{Abhinav
  Aggarwal}, \bibinfo{person}{Oluwaseyi Feyisetan}, {and}
  \bibinfo{person}{Nathanael Teissier}.} \bibinfo{year}{2021}\natexlab{}.
\newblock \showarticletitle{On a utilitarian approach to privacy preserving
  text generation}.
\newblock \bibinfo{journal}{\emph{arXiv preprint arXiv:2104.11838}}
  (\bibinfo{year}{2021}).
\newblock


\bibitem[Yang et~al\mbox{.}(2017)]%
        {yang2017anserini}
\bibfield{author}{\bibinfo{person}{Peilin Yang}, \bibinfo{person}{Hui Fang},
  {and} \bibinfo{person}{Jimmy Lin}.} \bibinfo{year}{2017}\natexlab{}.
\newblock \showarticletitle{Anserini: Enabling the use of lucene for
  information retrieval research}. In \bibinfo{booktitle}{\emph{Proceedings of
  the 40th international ACM SIGIR conference on research and development in
  information retrieval}}. \bibinfo{pages}{1253--1256}.
\newblock


\bibitem[Ye et~al\mbox{.}(2009a)]%
        {ye2009noise}
\bibfield{author}{\bibinfo{person}{Shaozhi Ye}, \bibinfo{person}{Felix Wu},
  \bibinfo{person}{Raju Pandey}, {and} \bibinfo{person}{Hao Chen}.}
  \bibinfo{year}{2009}\natexlab{a}.
\newblock \showarticletitle{Noise injection for search privacy protection}. In
  \bibinfo{booktitle}{\emph{2009 International Conference on Computational
  Science and Engineering}}, Vol.~\bibinfo{volume}{3}. IEEE,
  \bibinfo{pages}{1--8}.
\newblock


\bibitem[Ye et~al\mbox{.}(2009b)]%
        {DBLP:conf/cse/YeWPC09}
\bibfield{author}{\bibinfo{person}{Shaozhi Ye}, \bibinfo{person}{Shyhtsun~Felix
  Wu}, \bibinfo{person}{Raju Pandey}, {and} \bibinfo{person}{Hao Chen}.}
  \bibinfo{year}{2009}\natexlab{b}.
\newblock \showarticletitle{Noise Injection for Search Privacy Protection}. In
  \bibinfo{booktitle}{\emph{{CSE} {(3)}}}. \bibinfo{publisher}{{IEEE} Computer
  Society}, \bibinfo{pages}{1--8}.
\newblock


\bibitem[Zhang and He(2019)]%
        {zhang2019grip}
\bibfield{author}{\bibinfo{person}{Minjia Zhang} {and} \bibinfo{person}{Yuxiong
  He}.} \bibinfo{year}{2019}\natexlab{}.
\newblock \showarticletitle{Grip: Multi-store capacity-optimized
  high-performance nearest neighbor search for vector search engine}. In
  \bibinfo{booktitle}{\emph{Proceedings of the 28th ACM International
  Conference on Information and Knowledge Management}}.
  \bibinfo{pages}{1673--1682}.
\newblock


\bibitem[Zhou et~al\mbox{.}(2024a)]%
        {DBLP:conf/sp/ZhouPZS24}
\bibfield{author}{\bibinfo{person}{Mingxun Zhou}, \bibinfo{person}{Andrew
  Park}, \bibinfo{person}{Wenting Zheng}, {and} \bibinfo{person}{Elaine Shi}.}
  \bibinfo{year}{2024}\natexlab{a}.
\newblock \showarticletitle{Piano: Extremely Simple, Single-Server {PIR} with
  Sublinear Server Computation}. In \bibinfo{booktitle}{\emph{{SP}}}.
  \bibinfo{publisher}{{IEEE}}, \bibinfo{pages}{4296--4314}.
\newblock


\bibitem[Zhou et~al\mbox{.}(2024b)]%
        {DBLP:journals/iacr/ZhouSF24}
\bibfield{author}{\bibinfo{person}{Mingxun Zhou}, \bibinfo{person}{Elaine Shi},
  {and} \bibinfo{person}{Giulia Fanti}.} \bibinfo{year}{2024}\natexlab{b}.
\newblock \showarticletitle{Pacmann: Efficient Private Approximate Nearest
  Neighbor Search}.
\newblock \bibinfo{journal}{\emph{{IACR} Cryptol. ePrint Arch.}}
  (\bibinfo{year}{2024}), \bibinfo{pages}{1600}.
\newblock


\end{thebibliography}

\end{document}